\def\bi{\begin{itemize}}
\def\ei{\end{itemize}}
\def\bc{\begin{center}}
\def\ec{\end{center}}
\def\ba{ \begin{array}}
\def\ea{\end{array}}
\newcommand\be{\begin{equation}}
\newcommand\ee{\end{equation}}
\def\bc{\begin{center}}
\def\ec{\end{center}}
\newcommand\RR{{\mathbb R}}
\newcommand\ZZ{{\mathbb Z}}
\newtheorem{thm}{Theorem}[section]
\newtheorem{cor}[thm]{Corollary}
\newtheorem{lem}[thm]{Lemma}
\newtheorem{prop}[thm]{Proposition}
\numberwithin{equation}{section}
\begin{document}

\bibliographystyle{plain}

\title[Recovery of sparsest signals via $\ell^q$-minimization]
{Recovery of sparsest signals via
 $\ell^q$-minimization
}


\author{Qiyu Sun}
\address{Q. Sun, Department of Mathematics,  University of Central Florida,
Orlando, FL 32816, USA}

\email{qsun@mail.ucf.edu }


\date{\today }

%


\maketitle
\begin{abstract}
In this paper,  it is proved that every  $s$-sparse vector ${\bf x}\in {\mathbb R}^n$
 can be exactly recovered from
 the measurement vector ${\bf z}={\bf A} {\bf x}\in {\mathbb R}^m$ via  some
  $\ell^q$-minimization  with $0< q\le 1$, as soon as
  each $s$-sparse vector ${\bf x}\in {\mathbb R}^n$ is uniquely determined by the  measurement ${\bf z}$.
\end{abstract}

\section{Introduction and Main Results}

Define the norm $\|{\bf x}\|_q, 0\le q\le \infty$, of a vector  ${\bf x}=(x_1, \ldots, x_n)^T\in {\mathbb R}^n$
 by the number of its  nonzero components when $q=0$,
 the  quantity $(|x_1|^q+\cdots+|x_n|^q)^{1/q}$ when $0<q<\infty$,
 and  the maximum absolute value 
 $\max(|x_1|, \ldots, |x_n|)$ of its components when $q=\infty$.
We say that a vector ${\bf x}\in {\mathbb R}^n$ is  {\em $s$-sparse} if
$\|{\bf x}\|_0\le s$, i.e.,
the number of its nonzero components   is less than or equal to $s$.

\bigskip

In this paper, we consider the  problem  of compressive sensing  in  finding $s$-sparse  solutions
  ${\bf x}\in {\mathbb R}^n$ to the linear system
 \begin{equation}\label{section1.eq1}
{\bf A} {\bf x}= {\bf z}
 \end{equation}
 via solving the
$\ell^q$-minimization problem:
\begin{equation}\label{section1.eq2}
\min \|{\bf y}\|_q \quad {\rm subject\ to} \ {\bf A}{\bf y}={\bf z} 
\end{equation}
where $0<q\le 1$, $2\le 2s\le m\le n$, ${\bf A}$ is an
 $m\times n$ matrix, and    ${\bf z}\in {\mathbb R}^m$ is the observation data (\cite{blu08, candesrombergtao06, candestao05, char07, cdd09, Daubechies10}).

\bigskip

One of the  basic questions about finding $s$-sparse solutions to the linear system \eqref{section1.eq1} is
under what circumstances the linear system \eqref{section1.eq1} has a unique solution in $\Sigma_s$, the set of all $s$-sparse vectors.

 \begin{prop} {\rm (\cite{cdd09,donoho03})}\
 Let $2s\le m\le n$ and ${\bf A}$ be an
 $m\times n$ matrix. Then the following statements  are equivalent:
 \begin{itemize}

 \item[{(i)}] The  measurement ${\bf A}{\bf x}$ uniquely determines each $s$-sparse vector ${\bf x}$.

 \item[{(ii)}] There is a decoder $\Delta: {\mathbb R}^m\longmapsto {\mathbb R}^n$ such that $\Delta({\bf A}{\bf x})={\bf x}$ for all ${\bf x}\in \Sigma_s$.
 \item[{(iii)}] The only $2s$-sparse vector ${\bf y}$ that satisfies
   ${\bf A}{\bf y}={\bf 0}$  is the zero vector.

 \item[{(iv)}] There exist positive constants $\alpha_{2s}$ and $\beta_{2s}$ such that
 \begin{equation}\label{section1.eq3}
 \alpha_{2s}\|{\bf x}\|_2\le \|{\bf A}{\bf x}\|_2\le \beta_{2s}\|{\bf x}\|_2 \quad \ {\rm for \ all} \ {\bf x}\in \Sigma_{2s}.
 \end{equation}


  \end{itemize}
 \end{prop}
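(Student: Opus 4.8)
The plan is to establish all four equivalences by proving the cyclic chain $(\mathrm{i})\Rightarrow(\mathrm{ii})\Rightarrow(\mathrm{iii})\Rightarrow(\mathrm{iv})\Rightarrow(\mathrm{i})$. Three of these four implications are essentially formal manipulations with supports and linearity, and the entire analytic weight of the proposition rests on $(\mathrm{iii})\Rightarrow(\mathrm{iv})$, where compactness enters.

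For $(\mathrm{i})\Rightarrow(\mathrm{ii})$, statement (i) says precisely that the restriction of the linear map ${\bf x}\mapsto {\bf A}{\bf x}$ to $\Sigma_s$ is injective, so I define a decoder by setting $\Delta({\bf z})$ equal to the unique ${\bf x}\in\Sigma_s$ with ${\bf A}{\bf x}={\bf z}$ whenever such an ${\bf x}$ exists and $\Delta({\bf z})={\bf 0}$ otherwise; this gives $\Delta({\bf A}{\bf x})={\bf x}$ on $\Sigma_s$. For $(\mathrm{ii})\Rightarrow(\mathrm{iii})$, I use the support-splitting trick: any $2s$-sparse ${\bf y}$ with support $S$, $|S|\le 2s$, can be written as ${\bf y}={\bf x}_1-{\bf x}_2$ with ${\bf x}_1,{\bf x}_2\in\Sigma_s$ by partitioning $S$ into two pieces of size at most $s$. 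If moreover ${\bf A}{\bf y}={\bf 0}$, then ${\bf A}{\bf x}_1={\bf A}{\bf x}_2$, so applying the decoder yields ${\bf x}_1=\Delta({\bf A}{\bf x}_1)=\Delta({\bf A}{\bf x}_2)={\bf x}_2$, whence ${\bf y}={\bf 0}$. For $(\mathrm{iv})\Rightarrow(\mathrm{i})$, if ${\bf x}_1,{\bf x}_2\in\Sigma_s$ satisfy ${\bf A}{\bf x}_1={\bf A}{\bf x}_2$, then ${\bf x}_1-{\bf x}_2\in\Sigma_{2s}$ lies in the kernel of ${\bf A}$, and the lower bound in \eqref{section1.eq3} with $\alpha_{2s}>0$ forces $\|{\bf x}_1-{\bf x}_2\|_2=0$.

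The crux is $(\mathrm{iii})\Rightarrow(\mathrm{iv})$. The upper bound is automatic with $\beta_{2s}=\|{\bf A}\|$, the operator norm, since $\|{\bf A}{\bf x}\|_2\le\|{\bf A}\|\,\|{\bf x}\|_2$ for every ${\bf x}$. For the lower bound I would exploit that $\Sigma_{2s}$ is not a subspace but a \emph{finite} union of the coordinate subspaces $V_S=\{{\bf x}\in{\mathbb R}^n:\ {\rm supp}\,{\bf x}\subseteq S\}$ over the finitely many index sets $S$ with $|S|\le 2s$. On each fixed $V_S$ the map ${\bf A}$ is linear and, by (iii), injective, since a nonzero vector of $V_S$ is $2s$-sparse and hence not in the kernel. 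Restricting to the unit sphere $\{{\bf x}\in V_S:\ \|{\bf x}\|_2=1\}$, which is compact, the continuous function ${\bf x}\mapsto\|{\bf A}{\bf x}\|_2$ attains a minimum value $\alpha_S$, and injectivity forces $\alpha_S>0$. Setting $\alpha_{2s}=\min_S\alpha_S$, a minimum of finitely many positive numbers and hence itself positive, yields $\alpha_{2s}\|{\bf x}\|_2\le\|{\bf A}{\bf x}\|_2$ for all ${\bf x}\in\Sigma_{2s}$.

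The only place demanding care is this last step: the lower bound cannot be read off from a single global injectivity constant, because $\Sigma_{2s}$ is nonconvex and noncompact, so the argument must route through the \emph{finiteness} of the collection of admissible supports in order to glue the per-subspace compactness estimates into one uniform constant $\alpha_{2s}$. Everything else reduces to bookkeeping with supports and the elementary observation that a union of two supports of size $s$ has size at most $2s$.
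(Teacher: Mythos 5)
Your proof is correct, but there is nothing in the paper to compare it against: the proposition is stated as a known result, quoted from \cite{cdd09,donoho03}, and the paper supplies no proof of it (its own proofs begin with Theorem \ref{maintheorem1}). Judged on its own merits, your cyclic chain $(\mathrm{i})\Rightarrow(\mathrm{ii})\Rightarrow(\mathrm{iii})\Rightarrow(\mathrm{iv})\Rightarrow(\mathrm{i})$ is the standard argument from that literature and is complete: the decoder construction for $(\mathrm{i})\Rightarrow(\mathrm{ii})$ is well defined precisely because (i) is injectivity of ${\bf A}$ on $\Sigma_s$; the support-splitting trick ${\bf y}={\bf x}_1-{\bf x}_2$ handles $(\mathrm{ii})\Rightarrow(\mathrm{iii})$; and for $(\mathrm{iii})\Rightarrow(\mathrm{iv})$ you correctly identify that the uniform lower bound must be glued from the finitely many coordinate subspaces $V_S$, $\#S\le 2s$, each contributing a positive minimum by compactness of its unit sphere and injectivity of ${\bf A}|_{V_S}$. (Equivalently, one can observe that $\Sigma_{2s}\cap\{\|{\bf x}\|_2=1\}$ is itself compact, being a finite union of compact spheres, and minimize $\|{\bf A}{\bf x}\|_2$ over it directly; this is the same argument in one step.) One cosmetic remark: the proposition asks for a \emph{positive} constant $\beta_{2s}$, so you should either note that (iii) forces ${\bf A}\ne {\bf 0}$ (nonzero $2s$-sparse vectors exist since $n\ge 2s\ge 2$), hence $\|{\bf A}\|>0$, or simply take $\beta_{2s}=\max(\|{\bf A}\|,1)$.
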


\bigskip

 The first contribution of this paper is to provide another equivalent statement:
 \begin{itemize}
 \item[{(v)}] {\em  There exists $0<q\le 1$ such that the decoder
$\Delta: {\mathbb R}^m\longmapsto {\mathbb R}^n$ defined by
\begin{equation} \Delta({\bf z}):={\rm argmin}_{{\bf Ay}={\bf z}} \|{\bf y}\|_q \end{equation}
 satisfies
 $\Delta({\bf A}{\bf x})={\bf x}$ for all ${\bf x}\in \Sigma_s$.
 }
 \end{itemize}

 The implication from (v) to (ii) is obvious. Hence it suffices to prove the implication from (iv) to  (v). For this,
 we  recall the {\em restricted isometry property} of order $s$ for an $m\times n$ matrix ${\bf A}$, i.e., there exists a positive constant
  $\delta\in (0,1)$ such that
  \begin{equation}\label{section1.eq4}
 (1- \delta )\|{\bf x}\|_2^2\le \|{\bf A}{\bf x}\|_2^2\le (1+\delta)\|{\bf x}\|_2^2\quad \ {\rm for \ all} \  {\bf x}\in \Sigma_s.
  \end{equation}
  The smallest positive constant $\delta$  that satisfies \eqref{section1.eq4}, to be denoted by $\delta_s({\bf A})$, is  known as the {\em restricted isometry constant} \cite{candesrombergtao06, candestao05}. Notice that  given a matrix ${\bf A}$ that satisfies \eqref{section1.eq3}, its rescaled matrix ${\bf B}:=
 \sqrt{2/(\alpha_{2s}^2+\beta_{2s}^2)} {\bf A}$  has the restricted isometry property  of order $2s$
  and its restricted isometry constant is given by $(\beta_{2s}^2-\alpha_{2s}^2)/(\alpha_{2s}^2+\beta_{2s}^2)$.
  Therefore the implication from (iv) to (v) further reduces to  establishing the following result:

  \begin{thm} \label{maintheorem1} Let integers $m,n$ and $s$ satisfy $2s\le m\le n$.
 If ${\bf A}$ is an
 $m\times n$ matrix  with $\delta_{2s}({\bf A})\in (0,1)$, then  there exists $0<q\le 1$
 such that any $s$-sparse vector ${\bf x}$ can be exactly recovered by solving the $\ell^q$-minimization problem:
 \begin{equation}\label{section1.eq5}
  \min \|{\bf y}\|_q \quad {\rm subject \ to } \ \ {\bf A}{\bf y}={\bf A}{\bf x}.
 \end{equation}
  \end{thm}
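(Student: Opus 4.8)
The plan is to reduce exact recovery to a \emph{null space property} and then to produce a single admissible exponent $q$ by a compactness argument that exploits the injectivity of ${\bf A}$ on $\Sigma_{2s}$. For an index set $S\subseteq\{1,\dots,n\}$ write ${\bf h}_S$ for the vector agreeing with ${\bf h}$ on $S$ and vanishing off $S$, and let $S^c$ be the complement. First I would record the standard sufficient condition: if
\[
\|{\bf h}_S\|_q^q<\|{\bf h}_{S^c}\|_q^q \quad\text{for every } {\bf h}\in\ker{\bf A}\setminus\{{\bf 0}\} \text{ and every } |S|\le s,
\]
then every $s$-sparse ${\bf x}$ is the unique minimizer of \eqref{section1.eq5}. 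This follows from the elementary inequality $|a+b|^q\ge|a|^q-|b|^q$ (valid for $0<q\le1$ since $t\mapsto t^q$ is subadditive on $[0,\infty)$): writing a feasible competitor as ${\bf y}={\bf x}+{\bf h}$ with ${\bf h}\in\ker{\bf A}\setminus\{{\bf 0}\}$ and $S=\mathrm{supp}({\bf x})$, one gets $\|{\bf y}\|_q^q\ge\|{\bf x}\|_q^q-\|{\bf h}_S\|_q^q+\|{\bf h}_{S^c}\|_q^q>\|{\bf x}\|_q^q$. Thus it suffices to exhibit $q\in(0,1]$ for which the displayed property holds.

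Next I would invoke the hypothesis. The lower bound in \eqref{section1.eq4} with $\delta=\delta_{2s}({\bf A})<1$ shows ${\bf A}$ is injective on $\Sigma_{2s}$ (equivalently, statement (iii) of the Proposition holds), so every ${\bf h}\in\ker{\bf A}\setminus\{{\bf 0}\}$ has $\|{\bf h}\|_0\ge2s+1$. Since the property is invariant under scaling, I would restrict to the compact set $K=\{{\bf h}\in\ker{\bf A}:\|{\bf h}\|_2=1\}$, on which every coordinate satisfies $|h_i|\le1$. Taking $S$ to index the $s$ largest coordinates of ${\bf h}$ realizes the worst case, so the property on $K$ is equivalent to $g_q({\bf h}):=\|{\bf h}_{S^c}\|_q^q-\|{\bf h}_S\|_q^q>0$ for all ${\bf h}\in K$.

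Then I would argue by contradiction as $q\to0^+$. If the property failed for every $q$, I could pick $q_k\to0$ and ${\bf h}_k\in K$ with $g_{q_k}({\bf h}_k)\le0$, and extract ${\bf h}_k\to{\bf h}^\ast\in K$. Set $T=\mathrm{supp}({\bf h}^\ast)$, so $|T|\ge2s+1$ and $c:=\min_{i\in T}|h^\ast_i|>0$. For $k$ large the coordinates of ${\bf h}_k$ off $T$ are below $c/2$ while at least $2s+1$ coordinates on $T$ exceed $c/2$, so the top-$s$ set $S_k$ lies inside $T$, whence $T\setminus S_k$ contains at least $s+1$ indices with $|h_{k,i}|\ge c/2$. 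The crucial point is that I only need a \emph{lower} bound on the off-support mass, and the tiny coordinates may simply be discarded:
\[
\|({\bf h}_k)_{S_k^c}\|_{q_k}^{q_k}\ge(s+1)\,(c/2)^{q_k}\longrightarrow s+1,\qquad \|({\bf h}_k)_{S_k}\|_{q_k}^{q_k}\le s,
\]
the last bound because each $|h_{k,i}|\le1$. Hence $g_{q_k}({\bf h}_k)\ge(s+1)(c/2)^{q_k}-s\to1>0$, contradicting $g_{q_k}({\bf h}_k)\le0$ for large $k$. This produces the desired $q$.

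The main obstacle is exactly the non-uniformity concealed in the limit $q\to0$: the quasi-norms $\|\cdot\|_q^q$ converge to $\|\cdot\|_0$ only pointwise, and an arbitrarily small nonzero coordinate still contributes nearly $1$ to $\|\cdot\|_q^q$ for small $q$, so one cannot naively pass to the limit of $g_q$ on $K$ (the limiting functional $\|\cdot\|_0$ is discontinuous). The resolution above circumvents this by never using the small coordinates at all: injectivity on $\Sigma_{2s}$ forces the limit ${\bf h}^\ast$ to retain at least $s+1$ coordinates bounded away from zero, and these alone outweigh the at most $s$ coordinates in the top-$s$ block, which is all that $g_{q_k}>0$ requires.
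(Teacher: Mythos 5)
Your proof is correct, but it takes a genuinely different route from the paper. The paper obtains Theorem \ref{maintheorem1} as a byproduct of its quantitative machinery: it establishes the null space property in $\ell^q$ with the explicit constant $a(q,\delta_1)/\delta_1$ (Theorem \ref{maintheorem}, built on Lemmas \ref{inequality.lem}, \ref{inequality.lem2} and \ref{mainlemma}), deduces exact and stable recovery whenever $a(q,\delta_1)<\delta_1$ (Theorem \ref{maintheorem3} and Corollary \ref{maincorollary}), and then concludes by the observation that $\lim_{q\to 0+}a(q,\delta)=0$. You prove the same reduction to the null space property, but you settle the existence of an admissible $q$ by a soft compactness and contradiction argument on the unit sphere of $\ker{\bf A}$, using only the injectivity of ${\bf A}$ on $\Sigma_{2s}$ (condition (iii) of the equivalence in the introduction) rather than the restricted isometry inequalities themselves. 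Your key point---that the limit vector ${\bf h}^\ast$ keeps at least $2s+1$ coordinates bounded below by $c>0$, of which at least $s+1$ lie off the top-$s$ set and each contributes mass $(c/2)^{q_k}\to 1$, while the top-$s$ block contributes at most $s$---correctly circumvents the discontinuity of $\|\cdot\|_0$, and the worst-case reduction to the top-$s$ support set is also justified. What your argument buys is brevity and a slightly stronger implication: it proves (iii)$\Rightarrow$(v) directly, with no rescaling step and none of the technical lemmas. What it gives up is quantitativeness: the exponent $q$ you produce depends on the geometry of $\ker{\bf A}$ itself (through the constant $c$ extracted from the limit), not merely on $\delta_{2s}({\bf A})$, so your argument is not uniform over the class of matrices with a common restricted isometry constant and could not yield Theorem \ref{maintheorem2} (a lower bound on the infimum $q_{\max}(\delta;m,n,s)$ over all such matrices), nor the noise-robust error bounds of Theorem \ref{maintheorem3}; both of these require the explicit threshold $a(q,\delta_1)<\delta_1$ that the paper's longer route provides.
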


 The  above existence theorem about $\ell^q$-minimization  is established in \cite{foucartlai09} and
 \cite{char07}
 under a stronger assumption that $\delta_{2s+2}({\bf A})\in (0,1)$ and $\delta_{2s+1}({\bf A})\in (0,1)$
 respectively,  as it is obvious that
  $\delta_{2s}({\bf A})\le \delta_{2s+1}({\bf A})\le \delta_{2s+2} ({\bf A})$ for any $m\times n$ matrix ${\bf A}$.

 \bigskip

Given integers $s, m$ and $n$ satisfying $2s\le m\le n$ and  an $m\times n$ matrix ${\bf A}$, define
\begin{eqnarray}\label{section1.eq6-}
q_{s}({\bf A}) & := & \sup \big \{q\in [0,1]\big|\
{\rm any\  vector} \  {\bf x}\in \Sigma_s\  {\rm
can\ be\  exactly\ recovered}\nonumber\\
 & & {\rm  by \ solving\ the} \ \ell^q-{\rm minimization\ problem}\ \eqref{section1.eq5} \big\}.
\end{eqnarray}
Then $q_{s}({\bf A})>0$ whenever $\delta_{2s}({\bf A})<1$ by Theorem \ref{maintheorem1}.
It is also known that 
  any $s$-sparse vector ${\bf x}\in {\mathbb R}^n$
 can be exactly recovered by solving the $\ell^q$-minimization problem  \eqref{section1.eq5}
whenever  $q<q_{s}({\bf A})$
 \cite{gribonvalnielson07}. This establishes the equivalence among different $q\in [0, q_s({\bf A}))$
  in recovering  $s$-sparse solutions via
 solving the $\ell^q$-minimization problem \eqref{section1.eq5}.
Hence in order to recover sparsest vector ${\bf x}$ from the measurement ${\bf A}{\bf x}$,
 one may solve the $\ell^q$-minimization problem \eqref{section1.eq5} for some $0<q\le 1$ rather than
the $\ell^0$-minimization problem. Empirical evidence (\cite{char07, saab08,saab10}) strongly indicates that solving the $\ell^q$-minimization problem with $0<q\le 1$ takes much less time than with $q=0$.

The $\ell^0$-minimization problem is a combinatorial optimization  problem and NP-hard to solve \cite{nat95}, while on the other hand the $\ell^1$-minimization is convex and polynomial-time solvable
\cite{boyd04}. To guarantee the equivalence between
the $\ell^0$ and $\ell^1$-minimization problems \eqref{section1.eq5} in finding the  sparse
vector ${\bf x}$ from its measurement ${\bf A}{\bf x}$,
one needs to meet various requirements
on the matrix ${\bf A}$,  for instance, $\delta_s({\bf A})+\delta_{2s}({\bf A})+\delta_{3s}({\bf A})<1$ in \cite{candes06}, $\delta_{3s}({\bf A})+3\delta_{4s}({\bf A})<2$ in \cite{candesrombergtao06}, and
$\delta_{2s}({\bf A})<1/3\approx 0.3333, \sqrt{2}-1\approx 0.4142, 2/(3+\sqrt{2})\approx 0.4531,  2/(2+\sqrt{5})\approx 0.4731, 4/(6+\sqrt{6})\approx 0.4734$ in \cite{cdd09, candes, foucartlai09, cai10, foucart10} respectively.
Many random matrices with i.d.d. entries satisfy those requirement  to guarantee the equivalence \cite{candestao05},  but lots of deterministic matrices do not. In particular,
  matrices ${\bf A}_\epsilon$ are constructed in
 \cite{davisgribonval09} for any $\epsilon>0$ such that  $\delta_{2s}({\bf A}_{\epsilon})<1/\sqrt{2}+\epsilon$
 and that it fails on the recovery of  some $s$-sparse vectors ${\bf x}$ by
solving the $\ell^1$-minimization problem \eqref{section1.eq5} with ${\bf A}$ replaced by ${\bf A}_\epsilon$.

The $\ell^q$-minimization problem \eqref{section1.eq5}
with $0<q<1$ is more difficult to solve than the $\ell^1$-minimization problem
due to the nonconvexity and nonsmoothness. In fact, it is NP-hard to find a global minimizer in general
 but polynomial-time doable to find local minimizer \cite{jiang10}.
 Various  algorithms have been developed to solve the $\ell^q$-minimization problem \eqref{section1.eq5}, see for instance \cite{candeswakin08, chen09, Daubechies10, foucartlai09,rao99}.

\bigskip

For any $\delta\in (0,1)$,   define
\begin{eqnarray}\label{section1.eq6}
q_{\max}(\delta; m,n,s)  :=  \inf_{\delta_{2s}({\bf A})\le \delta} q_s({\bf A}).
\end{eqnarray}
Then
given  any positive number $q<q_{\max}(\delta; m,n,s)$ and any $m\times n$ matrix
${\bf A}$  with $\delta_{2s}({\bf A})\le \delta$,
 any  vector ${\bf x}\in \Sigma_s$ can be  exactly recovered  by  solving the $\ell^q$-minimization problem
 \eqref{section1.eq5}.
 For any $0<q\le 1$ and sufficiently small $\epsilon$, matrices ${\bf A}_{\epsilon}$ of  size $(n-1)\times n$
 are constructed in \cite{davisgribonval09}
   such that $\delta_{2s}({\bf A}_{q,\epsilon})<\frac{\eta_q}{2-q-\eta_q}+\epsilon$
and there is an $s$-sparse vector which cannot be recovered exactly by solving
  the $\ell^q$-minimization problem \eqref{section1.eq5} with ${\bf A}$  replaced by ${\bf A}_{q,\epsilon}$,   where
  $\eta_q$ is the unique positive solution to $\eta_q^{2/q}+1=2(1-\eta_q)/q$.
  The above construction of matrices for which the $\ell^q$-minimization fails to recover $s$-sparse vectors, together with
  the asymptotic estimate $\eta_q= 1-qx_0+o(q)$ as $q\to 0$, gives that
    \begin{equation}\label{section1.eq7}
  \limsup_{\delta\to 1-} \frac{q_{\max}(\delta; n-1, n, s)}{1-\delta}\le
  \lim_{q\to 0+} \frac{q(2-q-\eta_q)}{2-q-2\eta_q} =\frac{1}{2x_0-1}\approx 3.5911,
  \end{equation}
 where $x_0$ is the unique positive solution of the equation
  $e^{-2x}=2x-1$.
  The second contribution of this paper is a lower bound estimate for $q_{\max}(\delta; m,n,s)$
   as $\delta\to 1-$.

  \begin{thm}\label{maintheorem2} Let $q_{\max}(\delta; m,n,s)$ be defined as in \eqref{section1.eq6}. Then
  \begin{equation}\label{section1.eq8}
  \liminf_{\delta\to 1-} \frac{q_{\max}(\delta; m,n,s)}{1-\delta}\ge   \frac{e}{4} \approx 0.6796.  \end{equation}
  \end{thm}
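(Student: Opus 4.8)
The plan is to recast exact recovery as a null space condition and then read the threshold on $q$ off the restricted isometry estimate. Recall (as in \cite{gribonvalnielson07}) that every ${\bf x}\in\Sigma_s$ is recovered by the $\ell^q$-minimization \eqref{section1.eq5} if and only if ${\bf A}$ has the \emph{$\ell^q$ null space property of order $s$}: for every ${\bf h}\in\ker{\bf A}\setminus\{{\bf 0}\}$ and every index set $S$ with $|S|\le s$,
\be
\|{\bf h}_S\|_q^q<\|{\bf h}_{S^c}\|_q^q .
\ee
Thus $q_s({\bf A})=\sup\{q\in[0,1]:{\bf A}\text{ has the }\ell^q\text{ null space property of order }s\}$, and \eqref{section1.eq8} is equivalent to the following matrix-free assertion: for every $\epsilon>0$ and all $\delta$ close enough to $1$, every ${\bf A}$ with $\delta_{2s}({\bf A})\le\delta$ enjoys the $\ell^q$ null space property of order $s$ as soon as $q<(e/4-\epsilon)(1-\delta)$. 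So it suffices to decide, for a vector ${\bf h}$ lying in the kernel of \emph{some} matrix with $\delta_{2s}\le\delta$, how small $q$ may be forced to be before $\|{\bf h}_S\|_q^q\ge\|{\bf h}_{S^c}\|_q^q$ becomes possible.

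First I would fix ${\bf h}\in\ker{\bf A}$, rearrange its coordinates in nonincreasing order of magnitude, take $S$ to be the $s$ largest ones, and split $S^c$ into consecutive blocks $T_1,T_2,\dots$. Because only $\delta_{2s}$ is available, every use of the restricted isometry property must involve index sets of total cardinality at most $2s$; this is exactly what separates the present result from \cite{char07,foucartlai09}, where the larger constants $\delta_{2s+1}$ and $\delta_{2s+2}$ permit more generous groupings. From ${\bf A}{\bf h}={\bf 0}$, the lower isometry bound, and the restricted-orthogonality estimate $|\langle{\bf A}{\bf u},{\bf A}{\bf v}\rangle|\le\delta_{2s}\|{\bf u}\|_2\|{\bf v}\|_2$ for disjointly supported ${\bf u},{\bf v}$ with $|\mathrm{supp}\,{\bf u}|+|\mathrm{supp}\,{\bf v}|\le2s$, a first crude tail inequality reads
\be
\|{\bf h}_S\|_2\le\frac{\delta}{1-\delta}\sum_{j\ge1}\|{\bf h}_{T_j}\|_2 .
\ee
This is too weak, since it is also satisfied by vectors supported essentially on $S$ and one adjacent block. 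The refinement I would pursue applies the genuine order-$2s$ isometry lower bound to the leading group $S\cup T_1$ (forcing the tail blocks to have size at most $s/2$, so that $|S\cup T_1|+|T_j|\le2s$), which moves the first tail block to the \emph{left} and leaves only $\sum_{j\ge2}\|{\bf h}_{T_j}\|_2$ on the right; since $\delta_{2s}<1$ forbids any nonzero $2s$-sparse kernel vector, this correctly excludes the low-support profiles that defeat the crude bound.

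The second half is to convert this $\ell^2$ inequality into the $\ell^q$ null space inequality. Here I would invoke the monotone-rearrangement bounds
\be
\|{\bf h}_{T_{j+1}}\|_2\le t^{1/2-1/q}\|{\bf h}_{T_j}\|_q ,\qquad \|{\bf h}_S\|_2\ge s^{1/2-1/q}\|{\bf h}_S\|_q ,
\ee
where $t$ is the block size, turning everything into a relation among the scalar block quantities $\lambda_j:=\|{\bf h}_{T_j}\|_q$. What remains is an extremal problem: over all nonincreasing profiles $\lambda_0\ge\lambda_1\ge\cdots\ge0$ compatible with the derived constraint, decide when $\lambda_0^q\ge\sum_{j\ge1}\lambda_j^q$ is still possible. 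Writing $q=c(1-\delta)$ and letting $\delta\to1-$, the constraint forces the tail mass to spread over a number $k\sim1/(1-\delta)$ of comparable blocks, so that the extremal profile is (asymptotically) geometric; optimizing the resulting expression in the number of active blocks is expected to produce a $\big(1+\tfrac1k\big)^k\to e$ limit, and the balance $\lambda_0^q=\sum_{j\ge1}\lambda_j^q$ should first become unavoidable precisely at $c=e/4$, giving $\liminf_{\delta\to1-}q_{\max}(\delta;m,n,s)/(1-\delta)\ge e/4$.

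The main obstacle, I expect, is a genuine incompatibility between the two ingredients above. Excluding the degenerate near-support profiles needs the order-$2s$ isometry lower bound applied to a grouped leading block, which is only legitimate when the tail blocks are strictly smaller than $s$; but the $\ell^2$-to-$\ell^q$ conversion then leaves mismatched factors $s^{1/2-1/q}$ and $t^{1/2-1/q}$ whose ratio $(s/t)^{1/q-1/2}$ explodes as $q\to0$ for fixed $s$, whereas taking $t=s$ would cancel these factors but destroys the grouping that suppressed the degenerate profiles. Reconciling these—keeping the low-support profiles excluded while simultaneously cancelling all $s$-dependent constants, presumably by a more delicate decomposition (for instance averaging the estimate over shifted block partitions so that blocks of full size $s$ still inherit the order-$2s$ exclusion)—and then carrying out the limiting optimization sharply enough to extract the exact constant $e/4$ rather than a weaker numerical value, is where the real work lies.
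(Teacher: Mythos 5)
Your overall strategy---reduce to the null space property, block the decreasing rearrangement of a kernel vector, use only $\delta_{2s}$, then solve an extremal problem in $q$ as $\delta\to 1-$---is indeed the paper's strategy, but your proposal stalls exactly at the step you yourself flag, and that obstacle is an artifact of your RIP pairing, not an intrinsic difficulty. You apply the lower isometry bound to $S\cup T_1$ and then restricted orthogonality between $S\cup T_1$ and each tail block $T_j$, which forces $|S\cup T_1|+|T_j|\le 2s$, hence tail blocks of size $t\le s/2$ and the divergent mismatch factor $(s/t)^{1/q-1/2}$ as $q\to 0$. The paper never forms that pairing: with all blocks $S_1,S_2,\ldots$ of size exactly $s$, it substitutes the kernel relation ${\bf A}({\bf h}_{S_0}+{\bf h}_{S_1})=-{\bf A}\bigl(\sum_{i\ge 2}{\bf h}_{S_i}\bigr)$ into \emph{both} slots of the inner product, so that
\begin{align*}
(1-\delta_{2s}({\bf A}))\bigl(\|{\bf h}_{S_0}\|_2^2+\|{\bf h}_{S_1}\|_2^2\bigr)
&\le \Bigl\langle {\bf A}\Bigl(\sum_{i\ge 2}{\bf h}_{S_i}\Bigr),\,{\bf A}\Bigl(\sum_{j\ge 2}{\bf h}_{S_j}\Bigr)\Bigr\rangle \\
&\le (1+\delta_{2s}({\bf A}))\Bigl(\sum_{j\ge 2}\|{\bf h}_{S_j}\|_2\Bigr)^2,
\end{align*}
where restricted orthogonality is applied only to pairs $(S_i,S_j)$, $i\neq j$, $i,j\ge 2$, of total cardinality exactly $2s$, and the upper RIP bound only to single blocks. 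This keeps the first tail block on the left (excluding the degenerate near-support profiles), keeps every block at size $s$ so that all $s$-dependent factors cancel in the $\ell^2$-to-$\ell^q$ conversion, and needs nothing beyond $\delta_{2s}$. Your proposed repair (averaging over shifted block partitions) is neither needed nor developed into an argument; as written, the core of your proof is missing.

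Even granting the null space estimate, your route to the constant $e/4$ is heuristic and does not match where the constant actually comes from. The paper proves a sharp extremal lemma (Lemma \ref{mainlemma}): for a decreasing sequence with $\sum_{k\ge 1}\bigl(\sum_{i=1}^{s}a_{ks+i}^2\bigr)^{1/2}\ge\delta\bigl(\sum_{i=1}^{s}a_i^2\bigr)^{1/2}$ one has $\sum_{k\ge 1}\bigl(\sum_{i=1}^{s}a_{ks+i}^2\bigr)^{1/2}\le a(q,\delta)\,s^{1/2-1/q}\bigl(\sum_{j}a_j^q\bigr)^{1/q}$ with the explicit function $a(q,\delta)$ of \eqref{section1.eq9}; its proof is a delicate case analysis (the cut index $s_0$ in Case II) that handles precisely the low-support profiles you worry about. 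The threshold is then $\tilde q_{\max}(\delta_1)=\sup\{q\in(0,1]:\,a(q,\delta_1)<\delta_1\}$, and the computation in \eqref{maintheorem2.pf.eq2}--\eqref{maintheorem2.pf.eq3} gives $a(q,\delta_1)\approx(2q/e)^{1/2}$ in the relevant regime, hence $\tilde q_{\max}(\delta_1)\sim (e/2)\delta_1^2$ as $\delta_1\to 0$; the $e$ arises from $(1+q/2)^{-1/q}\to e^{-1/2}$, not from a $(1+1/k)^k$ block count. The remaining factor $1/2$ comes from $\delta_1^2=(1-\delta)/(1+\delta)\to(1-\delta)/2$, yielding $e/4$. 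None of these three ingredients (the double substitution, the extremal lemma, the asymptotics of $a$) is supplied by your proposal, so it has a genuine gap at its center rather than being an alternative proof.
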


\bigskip

Denote by ${\bf v}_S$ the vector
 which equals to ${\bf v}\in \RR^n$ on $S$ and vanishes on the complement $S^c$ where $S\subset\{1, \ldots, n\}$.
 We say that an $m\times n$ matrix ${\bf A}$ has the {\em null space property}
  of order $s$ in $\ell^q$ if there exists a positive constant $\gamma$ such that
  \begin{equation}\label{maintheorem4.eq1}
  \|{\bf h}_S\|_q\le \gamma \|{\bf h}_{S^c}\|_q
  \end{equation}
 hold for all ${\bf h}$ satisfying ${\bf A}{\bf h}={\bf 0}$ and all sets $S$ with its cardinality $\# S$  less than or equal to $s$
 (\cite{cdd09}).  The minimal constant $\gamma$ in \eqref{maintheorem4.eq1} is known as the {\em null space constant}.

  For  $0<q \le 1$ and $\delta\in (0,1)$,  define
\begin{eqnarray}\label{section1.eq9}
a(q, \delta) & := & \inf_{0<r_0<1} \max\Big\{\frac{1+r_0\delta}{(1+r_0^q\delta^q)^{1/q}},  \sup_{\sqrt{2}(1-r_0)\delta/2\le y\le 1}
  \frac{2 y}
 {\big(1+ 2^{-q/2}  y^{2+q}\big)^{1/q}}, \nonumber\\
& & \quad   \sup_{\sqrt{2}(1-r_0)\delta/2\le y\le 1}
 \frac{3y}
 {\big(1+y\big)^{1/q}}, \  \sup_{  1\le y}
 \frac{2y}
 {\big(1+y\big)^{1/q}}\Big\}.\end{eqnarray}
  The third contribution of this paper is the following result about the null space property of an  $m\times n$ matrix.

 \begin{thm}\label{maintheorem}
  Let $q$ be a positive number in $(0,1]$, integers $m, n$ and $s$ satisfy $2s\le m\le n$,
${\bf A}$ be an $m\times n$ matrix  with  $\delta_{2s}({\bf A})\in (0,1)$, and
set \begin{equation}\label{maintheorem3.eq1}
\delta_1:=\Big(\frac{1-\delta_{2s}({\bf A})}{1+\delta_{2s}({\bf A})}\Big)^{1/2}.\end{equation}
Then ${\bf A}$ has the null space property of order $s$  in $\ell^q$, and
its null space constant  is less than or equal to $a(q, \delta_1)/\delta_1$.
 \end{thm}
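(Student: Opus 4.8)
The plan is to derive the $\ell^q$ null space property \eqref{maintheorem4.eq1} directly from the restricted isometry inequality, arranging every vector that enters the restricted isometry estimates to be supported on at most $2s$ coordinates so that only $\delta_{2s}({\bf A})$ is used. Throughout I would write $\alpha=\sqrt{1-\delta_{2s}({\bf A})}$ and $\beta=\sqrt{1+\delta_{2s}({\bf A})}$, so that $\delta_1=\alpha/\beta$ by \eqref{maintheorem3.eq1}, and record the near-orthogonality consequence of the order-$2s$ inequality $(1-\delta_{2s}({\bf A}))\|{\bf x}\|_2^2\le\|{\bf A}{\bf x}\|_2^2\le(1+\delta_{2s}({\bf A}))\|{\bf x}\|_2^2$ on $\Sigma_{2s}$: for disjoint $U,V$ with $\#U,\#V\le s$ one has $|\langle{\bf A}{\bf h}_U,{\bf A}{\bf h}_V\rangle|\le\delta_{2s}({\bf A})\|{\bf h}_U\|_2\|{\bf h}_V\|_2$. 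Fixing ${\bf h}$ with ${\bf A}{\bf h}={\bf 0}$ and $S$ with $\#S\le s$, the first reduction is that replacing $S$ by the support $T_0$ of the $s$ largest-magnitude entries of ${\bf h}$ only increases $\|{\bf h}_S\|_q$ and decreases $\|{\bf h}_{S^c}\|_q$, so it suffices to treat $S=T_0$. Sorting the remaining components in decreasing order and cutting them into consecutive blocks $T_1,T_2,\dots$ of size $s$ yields the rearrangement inequality
\[
\|{\bf h}_{T_j}\|_2\le s^{1/2-1/q}\|{\bf h}_{T_{j-1}}\|_q\qquad(j\ge 1)
\]
together with the aggregation bound $\sum_{j\ge 1}\|{\bf h}_{T_j}\|_q\le\|{\bf h}_{S^c}\|_q$ (valid since $1/q\ge 1$). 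These are the only places the blocking enters, and the powers of $s$ will cancel against $\|{\bf h}_{T_0}\|_q\le s^{1/q-1/2}\|{\bf h}_{T_0}\|_2$, producing a dimension-free constant.

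The analytic core is an $\ell^2$ estimate for the head $T_0$. Using ${\bf A}{\bf h}={\bf 0}$ I would start from ${\bf A}{\bf h}_{T_0}=-\sum_{j\ge 1}{\bf A}{\bf h}_{T_j}$ and exploit two complementary consequences of the order-$2s$ restricted isometry inequality: the lower bound $\|{\bf A}{\bf h}_{T_0}\|_2\ge\alpha\|{\bf h}_{T_0}\|_2$ and the near-orthogonality bound on the cross terms in $\|{\bf A}{\bf h}_{T_0}\|_2^2=-\sum_{j\ge 1}\langle{\bf A}{\bf h}_{T_0},{\bf A}{\bf h}_{T_j}\rangle$. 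This leads to an inequality of the shape $\|{\bf h}_{T_0}\|_2\le C(\delta_{2s}({\bf A}))\sum_{j\ge 1}\|{\bf h}_{T_j}\|_2$. The purely inner-product estimate gives $C=\delta_{2s}({\bf A})/(1-\delta_{2s}({\bf A}))$, which is useful only when $\delta_{2s}({\bf A})<1/2$, whereas passing through the two-sided bound alone gives the finite but larger constant $\beta/\alpha=1/\delta_1$. The actual argument separates the leading block $T_1$ from the tail $\sum_{j\ge 2}$ at a threshold $r_0\in(0,1)$ and interpolates between these two estimates; this is what ties the head–leading-block interaction to the quantity $\sqrt{2}(1-r_0)\delta_1/2$ that appears as the lower endpoint for $y$ in \eqref{section1.eq9}.

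Converting back to $\ell^q$ by means of the rearrangement and aggregation bounds, with the cancellation $s^{1/q-1/2}\cdot s^{1/2-1/q}=1$, turns the head estimate into a scalar inequality in the single ratio $y$ of the head $\ell^2$-norm to the leading tail quantity. Its worst case over the regime $\sqrt{2}(1-r_0)\delta_1/2\le y\le 1$, where the head still dominates, and over $y\ge 1$, where the tail dominates, produces exactly the four competing expressions inside the maximum in \eqref{section1.eq9}; minimizing over the placement $r_0$ of the threshold yields the $\inf_{0<r_0<1}$, while the loss $1/\delta_1=\beta/\alpha$ incurred in passing from $\|{\bf A}{\bf h}_{T_0}\|_2$ back to $\|{\bf h}_{T_0}\|_2$ is carried out front. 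This is how the null space constant emerges as $a(q,\delta_1)/\delta_1$.

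The main obstacle is precisely the coupling between the head $T_0$ and the largest tail block $T_1$. Merging them into a single $2s$-sparse vector — the customary device — would force cross terms of support size $3s$ and replace $\delta_{2s}({\bf A})$ by $\delta_{3s}({\bf A})$, which is the stronger hypothesis used in \cite{char07,foucartlai09}. Keeping every estimate at order $2s$ while still closing the recursion is what compels the threshold split and the regime analysis in $y$, and it must be combined with careful use of the subadditivity of $t\mapsto t^q$ for $0<q\le 1$, since the $\ell^q$ quasi-norm is not subadditive in the ordinary sense. This quasi-norm bookkeeping, rather than any single inequality, is the delicate part that upgrades the crude $\delta_{2s}({\bf A})<1/2$-type conclusion to one valid for the full range $\delta_{2s}({\bf A})\in(0,1)$.
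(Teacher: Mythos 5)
Your proposal misses the mechanism that actually produces the constant $a(q,\delta_1)/\delta_1$, and at the critical juncture it rejects the very step the proof needs. The paper's argument hinges on grouping the head with the leading tail block: taking ${\bf h}_{S_0}+{\bf h}_{S_1}$ as a single $2s$-sparse test vector, using ${\bf A}{\bf h}={\bf 0}$ to write $\|{\bf A}({\bf h}_{S_0}+{\bf h}_{S_1})\|_2^2=\big\|{\bf A}\sum_{j\ge 2}{\bf h}_{S_j}\big\|_2^2$, and expanding the \emph{right-hand} side into pairwise cross terms $\langle {\bf A}{\bf h}_{S_i},{\bf A}{\bf h}_{S_j}\rangle$ with $i,j\ge 2$, each involving two disjointly supported $s$-sparse vectors, hence only $\delta_{2s}({\bf A})$. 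Your claim that merging $T_0$ and $T_1$ ``would force cross terms of support size $3s$'' is therefore wrong: the $3s$-terms arise only in the Candes--Tao/Foucart--Lai style of pairing the $2s$-sparse head against individual tail blocks \cite{candestao05,foucartlai09}, which is exactly what this expansion avoids. The grouping yields two facts at once (see \eqref{maintheorem.pf.eq4}): the upper bound $\|{\bf h}_{S_0}\|_2\le \delta_1^{-1}\sum_{j\ge 2}\|{\bf h}_{S_j}\|_2$ and, crucially, the \emph{lower} bound \eqref{maintheorem.pf.eq5}, $\sum_{j\ge 2}\|{\bf h}_{S_j}\|_2\ge \delta_1\|{\bf h}_{S_1}\|_2$. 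This lower bound appears nowhere in your proposal, yet it is the heart of the proof.

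It is indispensable because it is a constraint on the decreasing rearrangement of ${\bf h}_{S_0^c}$ alone, of exactly the form \eqref{mainlemma.eq1}, and Lemma \ref{mainlemma} (resting on the optimization Lemmas \ref{inequality.lem} and \ref{inequality.lem2}) shows that any decreasing sequence obeying it satisfies the improved embedding $\sum_{j\ge 2}\|{\bf h}_{S_j}\|_2\le a(q,\delta_1)\,s^{1/2-1/q}\|{\bf h}_{S_0^c}\|_q$, improving the trivial constant $1$ to $a(q,\delta_1)$ with no reference to the head; the theorem then follows in one line for every $\delta_{2s}({\bf A})\in(0,1)$. Your route instead keeps the head alone on the left, $\|{\bf h}_{T_0}\|_2\le C\sum_{j\ge 1}\|{\bf h}_{T_j}\|_2$, with $T_1$ on the right. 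But $\|{\bf h}_{T_1}\|_2$ can only be converted to $s^{1/2-1/q}\|{\bf h}_{T_0}\|_q$ (via the head block, making the inequality self-referential, so it closes only when the resulting constant is $<1$, i.e. under a restriction on $\delta_{2s}({\bf A})$ of the kind the theorem is designed to remove) or to $\|{\bf h}_{T_1}\|_q$ with no gain in $s$; neither produces $a(q,\delta_1)/\delta_1$. Finally, your account of the worst-case analysis is reverse-engineered from \eqref{section1.eq9} rather than derived: in the actual proof $y$ stands for $\sqrt{s_0/s}$, where $s_0\in[1,s]$ is the index at which the tail sequence first drops below the square-root profile ($a_{s+s_0+1}/a_{s+1}\le (s_0/s)^{1/2}$), the threshold $r_0$ splits Case I ($\sum_{k\ge 2}a_{ks+1}\ge r_0\delta a_{s+1}$) from Case II, and the endpoint $\sqrt{2}(1-r_0)\delta_1/2$ comes from the bound $s_0\ge (1-r_0)^2\delta_1^2 s/2$ in \eqref{mainlemma.pf.eq3} --- not from any ``ratio of the head $\ell^2$-norm to the leading tail quantity.''
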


The fourth contribution of this paper is to show that one can stably reconstruct a  compressive signal
from noisy observation under the hypothesis that
\begin{equation}\label{maintheorem3.eq2} a(  q, \delta_1)<\delta_1.
\end{equation}

\begin{thm}\label{maintheorem3} Let $m, n$ and $s$ be integers  with $2s\le m\le n$,
${\bf A}$ be an $m\times n$ matrix  with  $\delta_{2s}({\bf A})\in (0,1)$, $\epsilon\ge 0$,
  $q\in (0,1]$ satisfy \eqref{maintheorem3.eq2} with $\delta_1$  given
 in \eqref{maintheorem3.eq1}, and
 ${\bf x}^*$ be the solution of the  $\ell^q$-minimization problem:
\begin{equation}\label{maintheorem3.neweq1}
\min_{\tilde {\bf x}\in {\mathbb R}^n} \|\tilde {\bf x}\|_q \ \  {\rm subject \ to} \  \ \|{\bf A}\tilde {\bf x}-{\bf y}\|_2\le \epsilon
\end{equation}
where ${\bf y}={\bf A}{\bf x}+{\bf z}$ is the observation corrupted with unknown noise ${\bf z}$,
 $\|{\bf z}\|_2\le \epsilon$
and
${\bf x}$ is the object we wish to reconstruct.
Then
\begin{equation}\label{maintheorem3.neweq2}
\|{\bf x}^*-{\bf x}\|_2\le C_0  s^{1/2-1/q} \|{\bf x}-{\bf x}_s\|_q+ C_1 \epsilon,
\end{equation}
and
\begin{equation}\label{maintheorem3.neweq3}
\|{\bf x}^*-{\bf x}\|_q\le C_2   \|{\bf x}-{\bf x}_s\|_q+ C_3 s^{1/q-1/2} \epsilon,
\end{equation}
where ${\bf x}_{s}$ be the best $s$-sparse vector in ${\mathbb R}^n$ to approximate  ${\bf x}_0$, i.e.,
$$
\|{\bf x}_{s}-{\bf x}\|_q=\inf_{{\bf x}'\in \Sigma_s} \|{\bf x}'-{\bf x}\|_q$$
and
$C_i, 0\le i\le 3$, are positive constants independent on $\epsilon, {\bf x}$ and $s$.
 \end{thm}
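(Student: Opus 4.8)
The plan is to follow the familiar two-step structure of stable-recovery proofs---first a cone constraint coming from optimality of ${\bf x}^*$, then a \emph{robust} version of the null space estimate of Theorem \ref{maintheorem}---and to close the argument using the strict gap in \eqref{maintheorem3.eq2}. Write ${\bf h}={\bf x}^*-{\bf x}$ and let $S$ be the support of the $s$ largest-magnitude entries of ${\bf x}$, so that ${\bf x}_s={\bf x}_S$ and $\|{\bf x}-{\bf x}_s\|_q=\|{\bf x}_{S^c}\|_q$. Since $\|{\bf A}{\bf x}-{\bf y}\|_2=\|{\bf z}\|_2\le \epsilon$, the vector ${\bf x}$ is feasible for \eqref{maintheorem3.neweq1}, so optimality of ${\bf x}^*$ gives $\|{\bf x}^*\|_q\le \|{\bf x}\|_q$; and both ${\bf x}^*$ and ${\bf x}$ lie in the feasible tube, whence the triangle inequality yields $\|{\bf A}{\bf h}\|_2\le \|{\bf A}{\bf x}^*-{\bf y}\|_2+\|{\bf y}-{\bf A}{\bf x}\|_2\le 2\epsilon$. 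These two facts are the only consequences of the minimization I will use.

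First I would extract the cone constraint. Raising $\|{\bf x}^*\|_q\le \|{\bf x}\|_q$ to the $q$-th power, splitting the sum over $S$ and $S^c$, and applying the elementary scalar inequalities $|a+b|^q\ge |a|^q-|b|^q$ (valid for $0<q\le 1$) on each index, I obtain
\begin{equation*}
\|{\bf h}_{S^c}\|_q^q\le \|{\bf h}_S\|_q^q+2\|{\bf x}-{\bf x}_s\|_q^q .
\end{equation*}
This is the cone inequality familiar from the noiseless case; the noise will enter only through the separate bound $\|{\bf A}{\bf h}\|_2\le 2\epsilon$. The core of the work is a robust counterpart of the null space property \eqref{maintheorem4.eq1}. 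I would revisit the proof of Theorem \ref{maintheorem}---partitioning $S^c$ into successive blocks $S_1,S_2,\dots$ of size $s$ ordered by decreasing magnitude of ${\bf h}$, bounding $\|{\bf A}{\bf h}_{S\cup S_1}\|_2$ from below by the restricted isometry property \eqref{section1.eq4} and bounding the cross terms $\langle {\bf A}{\bf h}_{S\cup S_1},{\bf A}{\bf h}_{S_j}\rangle$ from above---but now retaining the inner product $\langle {\bf A}{\bf h}_{S\cup S_1},{\bf A}{\bf h}\rangle$, which vanished in the null space setting and is here controlled by $\|{\bf A}{\bf h}\|_2\le 2\epsilon$. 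The same optimization over the splitting parameter $r_0$ and the block ratio $y$ that produces the constant $a(q,\delta_1)$ in \eqref{section1.eq9} should then yield a robust estimate of the form
\begin{equation*}
\|{\bf h}_S\|_q^q\le \Big(\frac{a(q,\delta_1)}{\delta_1}\Big)^q\|{\bf h}_{S^c}\|_q^q + C\, s^{1-q/2}\,\epsilon^q ,
\end{equation*}
the power $s^{1-q/2}$ being forced by the conversion $\|{\bf u}\|_q\le s^{1/q-1/2}\|{\bf u}\|_2$ between the $\ell^q$ quasinorm of an $s$-sparse block and its Euclidean length.

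Finally I would combine the two displayed inequalities. Substituting the cone constraint into the robust estimate and invoking $a(q,\delta_1)/\delta_1<1$ from \eqref{maintheorem3.eq2} to move the $\|{\bf h}_S\|_q^q$ term to the left, I can solve for $\|{\bf h}_S\|_q^q$, and then for $\|{\bf h}\|_q^q=\|{\bf h}_S\|_q^q+\|{\bf h}_{S^c}\|_q^q$, obtaining \eqref{maintheorem3.neweq3} after taking $q$-th roots. For the Euclidean bound \eqref{maintheorem3.neweq2} I would combine the $\ell^2$ estimate of the head $\|{\bf h}_{S\cup S_1}\|_2$ produced by the restricted-isometry step with the standard shifting inequality $\|{\bf h}_{(S\cup S_1)^c}\|_2\le s^{1/2-1/q}\|{\bf h}_{S^c}\|_q$, and then feed in the $\ell^q$ tail bound just established; since $s^{1/2-1/q}\cdot s^{1/q-1/2}=1$, the measurement contribution collapses to a clean $C_1\epsilon$.

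The main obstacle is the robust null space estimate. One must carry the term $\|{\bf A}{\bf h}\|_2\le 2\epsilon$ faithfully through the multi-block optimization defining $a(q,\delta_1)$, and verify that the contraction constant is exactly the one from Theorem \ref{maintheorem}, so that the strict gap $\delta_1-a(q,\delta_1)>0$ is precisely what makes the error absorbable. Tracking the four competing $\sup_y$ regimes of \eqref{section1.eq9} while keeping the additive $\epsilon$ term correctly scaled is where the delicate estimates lie; everything else is the standard cone-versus-null-space bookkeeping.
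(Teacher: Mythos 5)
Your overall architecture coincides with the paper's: the tube bound $\|{\bf A}{\bf h}\|_2\le 2\epsilon$, the cone constraint $\|{\bf h}_{S^c}\|_q^q\le\|{\bf h}_S\|_q^q+2\|{\bf x}-{\bf x}_s\|_q^q$, the block decomposition of $S^c$, the RIP cross-term inequality, and the final absorption-and-assembly steps are all exactly the paper's moves, and those parts of your outline are sound. The genuine gap is the step you yourself flag as ``the main obstacle'': the robust null space estimate. You assert that ``the same optimization'' producing $a(q,\delta_1)$ will yield $\|{\bf h}_S\|_q^q\le \big(a(q,\delta_1)/\delta_1\big)^q\|{\bf h}_{S^c}\|_q^q+C\,s^{1-q/2}\,\epsilon^q$, but this does not come out of that optimization. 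The constant $a(q,\delta_1)$ is produced by Lemma \ref{mainlemma}, whose hypothesis \eqref{mainlemma.eq1} is the \emph{scale-invariant} inequality $\sum_{j\ge 2}\|{\bf h}_{S_j}\|_2\ge\delta_1\|{\bf h}_{S_1}\|_2$; in the noisy setting the RIP step only gives the inhomogeneous perturbation $\delta_1\big(\|{\bf h}_{S}\|_2^2+\|{\bf h}_{S_1}\|_2^2\big)^{1/2}\le 2\epsilon/\sqrt{1+\delta_{2s}({\bf A})}+\sum_{j\ge 2}\|{\bf h}_{S_j}\|_2$. The lemma's multi-case optimization is a statement about ratios formed from a decreasing sequence and has no slot for an additive term; moreover, every term in \eqref{section1.eq9} is non-increasing in $\delta$ (the admissible range of $y$ shrinks as $\delta$ grows, and $(1+t)(1+t^q)^{-1/q}$ is decreasing), so any weakening of the hypothesis strictly degrades the constant. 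Hence the exact contraction factor $a(q,\delta_1)/\delta_1$ you posit cannot be extracted by this route, and the central inequality of your proof is left underived.

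The missing idea---the paper's device---is a case split on the size of the tail $T=\sum_{j\ge 2}\|{\bf h}_{S_j}\|_2$ relative to the noise, combined with continuity of $a(q,\cdot)$ in $\delta$. Fix $r>0$ small enough that $a(q,\delta_1/(1+r))<\delta_1/(1+r)$, which is possible precisely because the inequality \eqref{maintheorem3.eq2} is strict and $a(q,\cdot)$ is continuous. If $T\le 2\epsilon/\big(r\sqrt{1+\delta_{2s}({\bf A})}\big)$, the RIP inequality alone shows that both the head $\big(\|{\bf h}_{S}\|_2^2+\|{\bf h}_{S_1}\|_2^2\big)^{1/2}$ and the tail $T$ are $O(\epsilon)$, which is what supplies the $C_1\epsilon$ and $C_3 s^{1/q-1/2}\epsilon$ terms in \eqref{maintheorem3.neweq2} and \eqref{maintheorem3.neweq3}. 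If instead $T\ge 2\epsilon/\big(r\sqrt{1+\delta_{2s}({\bf A})}\big)$, the noise is absorbed \emph{multiplicatively}: $\delta_1\big(\|{\bf h}_{S}\|_2^2+\|{\bf h}_{S_1}\|_2^2\big)^{1/2}\le(1+r)T$, so Lemma \ref{mainlemma} applies verbatim with $\delta=\delta_1/(1+r)$, and one obtains your estimate with the slightly larger---but still contractive---constant $a(q,\delta_1/(1+r))/\big(\delta_1/(1+r)\big)$ in place of $a(q,\delta_1)/\delta_1$, and with no $\epsilon$ term at all in this branch. With that replacement your cone-substitution, absorption, and the $\ell^2$ assembly go through unchanged. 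Without this case analysis (or an equivalent mechanism for converting the additive noise into a multiplicative perturbation of $\delta_1$), your proof is incomplete at its crucial step.
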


The  
 stable reconstruction of a compressive signal from its noisy observation
is established under various assumptions on the restricted isometry constant, for instance,
 $\delta_{3s}({\bf A})+3\delta_{4s}({\bf A})<2$  and $q=1$ in \cite{candesrombergtao06}, and
$\delta_{2s}({\bf A})< \sqrt{2}-1$ and $q=1$ in \cite{candes},
$\delta_{2t}({\bf A})< 2(\sqrt{2}-1) (t/s)^{1/q-1/2}/(1+2(\sqrt{2}-1) (t/s)^{1/q-1/2})$
for some $t\ge s$ and $0<q\le 1$ in \cite{foucartlai09}, and
$\delta_{ks}({\bf A})+k^{2/p-1}\delta_{(k+1)s}({\bf A})< k^{2/q}-1$ for some $k\in \ZZ/s$ and $0<q\le 1$
in \cite{saab08, saab10}.

 \bigskip
As an application of Theorem \ref{maintheorem3},
any $s$-sparse vector ${\bf x}$ can be exactly recovered by solving the $\ell^q$-minimization problem
 \eqref{section1.eq5} when  $q\in (0,1]$ satisfies \eqref{maintheorem3.eq2}.

\begin{cor}\label{maincorollary} Let $m, n$ and $s$ be integers  with $2s\le m\le n$,
${\bf A}$ be an $m\times n$ matrix  with  $\delta_{2s}({\bf A})\in (0,1)$, $\epsilon\ge 0$,
  $q\in (0,1]$ satisfy \eqref{maintheorem3.eq2} with $\delta_1$  given
 in \eqref{maintheorem3.eq1}. Then
 any $s$-sparse vector ${\bf x}$ can be exactly recovered by solving the $\ell^q$-minimization problem
 \eqref{section1.eq5}.
 \end{cor}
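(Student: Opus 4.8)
The plan is to deduce the corollary directly from Theorem \ref{maintheorem3} by specializing it to the noiseless, exactly sparse case. The first thing I would note is that the hypotheses of the corollary coincide exactly with those of Theorem \ref{maintheorem3}: both assume $2s\le m\le n$, $\delta_{2s}({\bf A})\in(0,1)$, and $q\in(0,1]$ satisfying \eqref{maintheorem3.eq2} with $\delta_1$ as in \eqref{maintheorem3.eq1}. Hence Theorem \ref{maintheorem3} is applicable once the noise level and the target vector are chosen appropriately.

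Next I would set $\epsilon=0$. Since the unknown noise ${\bf z}$ obeys $\|{\bf z}\|_2\le\epsilon=0$, it must vanish, so ${\bf y}={\bf A}{\bf x}$; and the feasibility constraint $\|{\bf A}\tilde{\bf x}-{\bf y}\|_2\le\epsilon$ in \eqref{maintheorem3.neweq1} degenerates into the equality ${\bf A}\tilde{\bf x}={\bf A}{\bf x}$. Thus the $\ell^q$-minimization problem \eqref{maintheorem3.neweq1} reduces verbatim to the noiseless problem \eqref{section1.eq5}, and the minimizer ${\bf x}^*$ supplied by Theorem \ref{maintheorem3} is precisely a solution of \eqref{section1.eq5}.

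I would then use that ${\bf x}\in\Sigma_s$ is $s$-sparse, so that its best $s$-sparse approximation is ${\bf x}_s={\bf x}$ and $\|{\bf x}-{\bf x}_s\|_q=0$. Inserting $\epsilon=0$ and $\|{\bf x}-{\bf x}_s\|_q=0$ into the stability estimate \eqref{maintheorem3.neweq2} gives
\[
\|{\bf x}^*-{\bf x}\|_2\le C_0\, s^{1/2-1/q}\cdot 0+C_1\cdot 0=0,
\]
which forces ${\bf x}^*={\bf x}$. As this argument applies to any minimizer of \eqref{section1.eq5}, the $s$-sparse vector ${\bf x}$ is the unique minimizer and is exactly recovered.

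I do not anticipate any genuine obstacle, since all of the analytic content is already contained in the proof of Theorem \ref{maintheorem3}; the corollary is a clean limiting case. The only two points deserving a line of care are checking that the feasible sets of \eqref{maintheorem3.neweq1} and \eqref{section1.eq5} genuinely coincide when $\epsilon=0$, and noting that, because the stability bound holds for \emph{every} minimizer, one obtains exact recovery rather than merely the existence of one correct minimizer.
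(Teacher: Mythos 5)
Your proposal is correct and takes essentially the same route as the paper: the paper presents Corollary \ref{maincorollary} precisely as an application of Theorem \ref{maintheorem3}, obtained by setting $\epsilon=0$ (so that \eqref{maintheorem3.neweq1} reduces to \eqref{section1.eq5}) and using that ${\bf x}\in\Sigma_s$ gives ${\bf x}_s={\bf x}$, whence \eqref{maintheorem3.neweq2} forces ${\bf x}^*={\bf x}$. Your added remarks on the coincidence of the feasible sets and on the bound applying to every minimizer are correct refinements of the same argument, not a different approach.
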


 Let
 $$q_{\rm succ}(\delta)={\tilde q}_{\max}\big(\sqrt{(1-\delta)/(1+\delta)}\big)$$
where $ \tilde q_{\max}(\delta_1)=\sup\{q\in (0,1]| \ a(q,  \delta_1)<\delta_1\}$,
and let $q_{\rm fail}(\delta)$ be the solution of the equation
$$\Big(\frac{(2-q)\delta}{1+\delta}\Big)^{2/q}+1=\frac{2-2\delta+2q\delta}{q+q\delta}$$
if it exists and be equal to one otherwise.
Then by Theorem \ref{maintheorem3}, any $s$-sparse vector ${\bf x}$ can be exactly recovered by
solving the $\ell^q$-minimization problem \eqref{section1.eq5}
 when $q<q_{\rm succ}(\delta_{2s}({\bf A}))$, while by \cite{davisgribonval09}
there exists a matrix ${\bf A}$ with $\delta_{2s}({\bf A})\le \delta$ and an $s$-sparse vector
${\bf x}$  such that the vector ${\bf x}$ cannot be exactly recovered by
solving the $\ell^q$-minimization problem \eqref{section1.eq5}
 when $q>q_{\rm fail}(\delta)$.
  The functions  ${q}_{\rm succ}(\delta)$ and $q_{\rm fail}(\delta)$ are plotted in Figure
 \ref{figure1}.

\begin{figure}[hbt]
\centering
\begin{tabular}{c}
  \includegraphics[width=90mm]{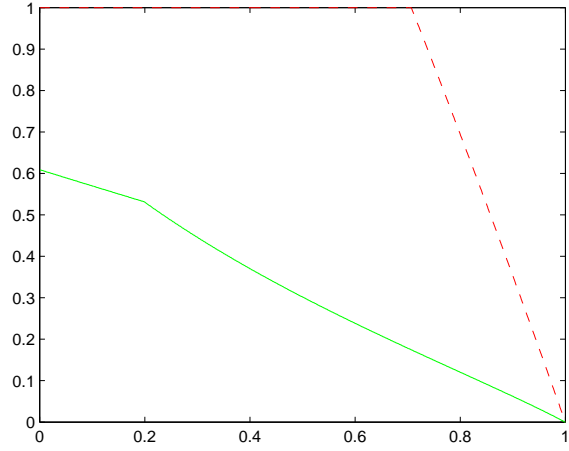}
\end{tabular}
\caption{The function ${q}_{\rm succ}(\delta)$ is plotted in continuous line, while the function
 $q_{\rm fail}(\delta)$ is plotted in
  dashed
line}
 \label{figure1}
\end{figure}

  \section{Proofs}

In this section, we give the proofs of Theorems \ref{maintheorem1}, \ref{maintheorem2}, \ref{maintheorem} and \ref{maintheorem3}.

\subsection{Proof of Theorem \ref{maintheorem}}

 To prove Theorem \ref{maintheorem}, we need three technical  lemmas.

\begin{lem} \label{inequality.lem}
Let $0<q\le 1, 0\le c\le 1$ and $a,b> 0$. Then
\begin{equation} \label{inequality.lem.eq1}
a+\sum_{k=1}^m t_k\le
\max\Big\{\max_{1\le k\le m} \frac{k+a}{(k+b)^{1/q}},  \frac{a+c}{(b+c^q)^{1/q}}\Big\}
\Big(b+\sum_{k=1}^m t_k^q\Big)^{1/q}
\end{equation}
holds for any
$(t_1,\ldots, t_m)\in [0,1]^m$ with
$t_1+\cdots+t_m\ge c$.
\end{lem}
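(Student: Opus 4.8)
The plan is to regard the claimed inequality as a bound on the ratio
$$F(t_1,\dots,t_m) := \frac{a+\sum_{k=1}^m t_k}{\big(b+\sum_{k=1}^m t_k^q\big)^{1/q}}$$
over the compact constraint set $\mathcal{K} := \{(t_1,\dots,t_m)\in[0,1]^m : \sum_{k=1}^m t_k \ge c\}$. Since the denominator is bounded below by $b^{1/q}>0$, the function $F$ is continuous on $\mathcal{K}$, and $\mathcal{K}$ is nonempty (as $c\le 1$) and compact, so $F$ attains its maximum at some point $t^\ast=(t_1^\ast,\dots,t_m^\ast)$. The whole argument then reduces to locating this maximizer and evaluating $F$ there.

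The key computation is a one-variable slice analysis. Fixing all coordinates but the $k$-th and writing $A:=a+\sum_{j\ne k}t_j$ and $B:=b+\sum_{j\ne k}t_j^q$, I would study $\phi(t):=(A+t)/(B+t^q)^{1/q}$ on $[0,1]$. Logarithmic differentiation shows that the sign of $\phi'(t)$ equals the sign of $B-At^{q-1}$; since $t\mapsto t^{q-1}$ is nonincreasing for $0<q\le 1$, the quantity $B-At^{q-1}$ changes sign at most once, from negative to positive. Hence $\phi$ is first decreasing then increasing (and merely monotone when $q=1$), so it has no interior local maximum and attains its maximum on any subinterval at an endpoint.

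I would then split into two cases according to whether the linear constraint is active at $t^\ast$. If $\sum_k t_k^\ast = c$, the numerator $a+c$ is fixed, and the superadditivity $(\sum_k t_k^\ast)^q \le \sum_k (t_k^\ast)^q$ valid for $0<q\le 1$ gives $\sum_k (t_k^\ast)^q \ge c^q$; therefore $F(t^\ast)\le (a+c)/(b+c^q)^{1/q}$, the second term on the right-hand side. If instead $\sum_k t_k^\ast > c$, then the constraint is locally inactive, so $t^\ast$ is a local maximum of $F$ over the box $[0,1]^m$, and the slice analysis forces each $t_k^\ast\in\{0,1\}$. Writing $j$ for the number of coordinates equal to $1$ gives $\sum_k t_k^\ast=\sum_k (t_k^\ast)^q=j$, with $j\ge 1$ because $\sum_k t_k^\ast>c\ge 0$; thus $F(t^\ast)=(a+j)/(b+j)^{1/q}\le \max_{1\le k\le m}(k+a)/(k+b)^{1/q}$. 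Combining the two cases yields the stated bound.

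The step I expect to require the most care is the interface between the box constraints $t_k\in[0,1]$ and the linear constraint $\sum_k t_k\ge c$: I must argue cleanly that the endpoint conclusion of the slice analysis applies to every coordinate precisely in the inactive case, and that in the active case the maximizer of $F$ on the face $\{\sum_k t_k=c\}$ is controlled solely through minimizing $\sum_k t_k^q$, for which superadditivity is the decisive inequality. Verifying the sign change of $\phi'$ uniformly in $q\in(0,1]$, including the degenerate monotone case $q=1$, is the other routine but necessary check.
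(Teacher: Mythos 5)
Your proof is correct, and it takes a genuinely more direct route than the paper's. The paper introduces the two-parameter family $F_{q,a,b,c}(m,n)=\sup\,(n+a+\sum_k t_k)/(n+b+\sum_k t_k^q)^{1/q}$, invokes Lagrange multipliers to conclude that the maximum occurs on the boundary or at points with all coordinates equal, and then iterates a recursion that peels off one variable at a time, tracking the active-constraint contribution through terms of the form $(n+a+c)/(n+b+k^{1-q}c^q)^{1/q}$ and absorbing them at the end using $k^{1-q}\ge 1$. You replace this entire recursion by a single case split at a global maximizer $t^\ast$: when the constraint $\sum_k t_k\ge c$ is active, you bound the denominator below by subadditivity of $x\mapsto x^q$, giving $\sum_k (t_k^\ast)^q\ge\big(\sum_k t_k^\ast\big)^q=c^q$ --- a pointwise estimate the paper never uses, which moreover requires no optimality of $t^\ast$ at all; when it is inactive, your slice analysis (the same one-dimensional fact underlying the paper's argument, namely that $t\mapsto(A+t)/(B+t^q)^{1/q}$ has interior critical points only as minima) pins every coordinate to $\{0,1\}$, yielding the terms $(j+a)/(j+b)^{1/q}$. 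What this buys is a shorter proof with no induction, no Lagrange-multiplier bookkeeping, and no auxiliary parameter $n$; what the paper's recursion buys, arguably, is an explicit identity for the supremum rather than just an upper bound, though only the upper bound is ever used.

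One small point to tighten: when $q=1$ the slice $(A+t)/(B+t)$ can be constant (exactly when $A=B$, i.e., $a=b$), in which case the assertion that it ``has no interior local maximum'' fails and the conclusion $t_k^\ast\in\{0,1\}$ does not follow from local optimality alone. This is harmless: in that degenerate case $F\equiv 1$ and the claimed bound holds trivially because $(k+a)/(k+b)=1$; alternatively, a constant slice lets you move $t_k^\ast$ up to $1$ without decreasing $F$ or leaving the constraint set. For $q<1$ the slice is strictly decreasing then strictly increasing (since $B-At^{q-1}$ is strictly increasing and negative near $t=0$), so your argument is airtight as written.
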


\begin{proof}
Define
\begin{equation}\label{inequality.lem.pf.eq1}
F_{q,a,b,c}(m,n)=\sup_{\small \begin{array} {l} (t_1,\ldots, t_m)\in [0,1]^m\\
t_1+\cdots+t_m\ge c\end{array}} \frac{n+a+\sum_{k=1}^m t_k}{(n+b+\sum_{k=1}^m t_k^q)^{1/q}}.\end{equation}
By the method of  Lagrange multiplier, the function $(n+a+\sum_{k=1}^m t_k)(n+b+\sum_{k=1}^m t_k^q)^{-1/q}$
attains its maximum on the  boundary  or  on those points $(t_1, \ldots, t_m)$ whose components are  the same,
i.e.,
\begin{eqnarray*}
F_{q,a,b,c }(m,n) & = & \max\Big\{F_{q,a,b,0}(m-1, n+1), F_{q,a,b,c}(m-1, n), \nonumber\\
& & \qquad \qquad \sup_{c/m\le t\le 1} \frac{n+a+mt}{(n+b+mt^q)^{1/q}}\Big\}.
\end{eqnarray*}
As  the function $(n+a+mt)(n+b+mt^q)^{-1/q}$ has at most  one critical point  
and the second derivative at that critical point (if it exists) is positive, we then have
\begin{eqnarray} \label{inequality.lem.pf.eq2}
F_{q,a,b,c }(m,n)
& = & \max\Big\{F_{q,a, b, 0}(m-1, n+1), F_{q,a, b, c}(m-1, n),\nonumber\\
 & & \qquad \qquad  \frac{n+m+a}{(n+m+b)^{1/q}}, \frac{n+a+c}{(n+b+m^{1-q}c^q)^{1/q}}\Big\}.
 \end{eqnarray}
 Applying \eqref{inequality.lem.pf.eq2} iteratively we obtain
 \begin{eqnarray}\label{inequality.lem.pf.eq3}
F_{q,a,b,c }(m,n) & = & \max\Big\{F_{q,a, b, 0}(m-2, n+2), F_{q,a, b, 0}(m-2, n+1),\nonumber\\
 & & \qquad  F_{q,a, b, c}(m-2, n),  \frac{n+1+a}{(n+1+b)^{1/q}}, \frac{n+m-1+a}{(n+m-1+b)^{1/q}}, \nonumber\\
 & & \qquad \frac{n+m+a}{(n+m+b)^{1/q}}, \frac{n+a+c}{(n+b+(m-1)^{1-q}c^q)^{1/q}}
\Big\}\nonumber\\
  & = & \cdots\nonumber\\
& = &  \max\Big\{F_{q,a, b, 0}(1, n+m-1), \cdots, F_{q,a, b, 0}(1, n+1),\nonumber\\
 & & \qquad   F_{q,a, b, c}(1, n), \frac{n+a+c}{(n+b+2^{1-q}c^q)^{1/q}},\nonumber\\
 & & \qquad   \frac{n+m+a}{(n+m+b)^{1/q}}, \ldots, \frac{n+2+a}{(n+2+b)^{1/q}}, \frac{n+1+a}{(n+1+b)^{1/q}}\Big\}
 \nonumber\\
 & = & \max\Big\{\max_{1\le k\le m} \frac{n+k+a}{(n+k+b)^{1/q}},  \frac{n+a+c}{(n+b+c^q)^{1/q}}\Big\}.
\end{eqnarray}
Then   the conclusion \eqref{inequality.lem.eq1} follows by letting $n=0$ in the above estimate.
\end{proof}

\begin{lem}\label{inequality.lem2}
Let $0<q\le 1$, $c_1, c_2\in [0,1]$ and $a_i, b_i>0$ for $i=1,2, 3$. Then
\begin{eqnarray}\label{inequality.lem2.eq1}
& &  a_1+a_2 x+a_3 y +\sum_{k=1}^m t_k\nonumber\\
 & \le &
\max\Big\{\frac{a_1+a_2  } {(b_1+b_2)^{1/q}},
\frac{a_1+a_2c_1  } {(b_1+b_2c_1^q)^{1/q}},  \sup_{0\le l\le m}
 \frac{a_1+a_2+(a_3+l)c_2} {(b_1+b_2+(b_3+l)c_2^q)^{1/q}},\nonumber\\
 & &  \sup_{0\le l\le m}
 \frac{a_1+a_2 c_1+(a_3+l)c_2} {(b_1+b_2c_1^q+(b_3+l)c_2^q)^{1/q}}
 \Big\}  \times \Big(b_1+b_2 x^q+b_3 y^q +\sum_{k=1}^m t_k^q\Big)^{1/q}
\end{eqnarray}
holds for all $0\le t_1, \ldots, t_m\le y$, $c_1\le x\le 1$ and $0\le y\le c_2$.
\end{lem}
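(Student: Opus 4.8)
The plan is to mirror the variational argument used for Lemma \ref{inequality.lem}, now carrying the two extra variables $x$ and $y$. First I would introduce the supremum
\[
G:=\sup \frac{a_1+a_2x+a_3y+\sum_{k=1}^m t_k}{\big(b_1+b_2x^q+b_3y^q+\sum_{k=1}^m t_k^q\big)^{1/q}}
\]
taken over the compact region $c_1\le x\le 1$, $0\le y\le c_2$, $0\le t_k\le y$. Since $b_1>0$, the denominator is bounded below by $b_1^{1/q}>0$, so the ratio is continuous and $G$ is attained; establishing \eqref{inequality.lem2.eq1} is then equivalent to showing that $G$ does not exceed the maximum of the four displayed quantities.

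The single technical fact I would record first, exactly as in the proof of Lemma \ref{inequality.lem}, is that a function of the shape $u\mapsto (A+au)(B+bu^q)^{-1/q}$ with $A,B,a,b>0$ has at most one critical point on $(0,\infty)$, and that this critical point, when it exists, is a local minimum (its second derivative there is positive). Consequently the restriction of such a function to any closed interval attains its maximum at an endpoint. I would apply this three times. For fixed $x,y$ the maximization in the symmetric variables $t_k$ is handled by a Lagrange-multiplier argument as in Lemma \ref{inequality.lem}: the optimality condition forces any $t_k$ lying in the open interval $(0,y)$ to share a common value $t$, and optimizing the resulting single-variable ratio in $t$ over $[0,y]$ drives $t$ to an endpoint; hence at the maximum every $t_k\in\{0,y\}$, so that $\sum_k t_k=ly$ and $\sum_k t_k^q=ly^q$ for some integer $0\le l\le m$.

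After this reduction the problem becomes the two-variable maximization of
\[
\frac{a_1+a_2x+(a_3+l)y}{\big(b_1+b_2x^q+(b_3+l)y^q\big)^{1/q}}
\]
over the rectangle $(x,y)\in[c_1,1]\times[0,c_2]$, with $l$ a free discrete parameter. For each fixed $y$ and $l$ this is a function of $x$ of the form just discussed, so its maximum over $[c_1,1]$ is attained at $x\in\{c_1,1\}$; likewise, for each fixed $x$, its maximum over $[0,c_2]$ is attained at $y\in\{0,c_2\}$. Since the maximum along every horizontal and every vertical slice sits at an endpoint, the joint maximum over the rectangle sits at a corner: at the attained optimum one first moves $x$ to $\{c_1,1\}$ without decreasing the value, then moves $y$ to $\{0,c_2\}$. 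Evaluating at the four corners gives exactly the four quantities in \eqref{inequality.lem2.eq1}: the corners with $y=0$ (which forces every $t_k=0$) produce the first two terms, while the corners with $y=c_2$ produce the last two, the supremum over $0\le l\le m$ recording how many of the $t_k$ equal $c_2$. Taking the largest of these corner values yields $G$, and the lemma follows.

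The main obstacle I anticipate is not the combinatorial bookkeeping but the single-variable convexity-type fact above: one must verify that the lone critical point of $(A+au)(B+bu^q)^{-1/q}$ is genuinely a minimum for all $0<q\le 1$, since it is precisely this that forces each of $x$, $y$, and the $t_k$ to its boundary and thereby collapses a continuum of possibilities onto the four corner configurations. A secondary point requiring care is the degenerate case $y=0$, where the constraint $t_k\le y$ already pins all $t_k$ to $0$ and the parameter $l$ drops out; this should be treated as a boundary case so that the Lagrange step is invoked only where $y>0$.
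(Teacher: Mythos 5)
Your proposal is correct and is essentially the paper's own argument: the paper's proof rests on exactly the same endpoint-maximization property of $u\mapsto(A+au)(B+bu^q)^{-1/q}$, applied successively to each $t_k$, then to $y$ and $x$ (the paper collapses $y$ before $x$, but the order is immaterial), producing precisely the four corner values in \eqref{inequality.lem2.eq1}. The single-variable fact you flag as the main obstacle is easily checked: the derivative of $(A+au)(B+bu^q)^{-1/q}$ has the sign of $aB-Abu^{q-1}$, which changes sign at most once on $(0,\infty)$ and only from negative to positive, so any critical point is a minimum and the maximum on a closed interval is attained at an endpoint.
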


\begin{proof} Note that
the maximum values
 of the function $(a+bt)/(c+dt^q)^{1/q}$  on any closed subinterval of $[0,\infty)$
are attained on its boundary. Then
\begin{eqnarray*}
& & \frac{a_1+a_2 x+a_3 y +\sum_{k=1}^m t_k} {(b_1+b_2 x^q+b_3 y^q +\sum_{k=1}^m t_k^q)^{1/q}}\nonumber \\
 & = &
 \sup_{0\le l\le m}  \frac{a_1+a_2 x+(a_3+l) y } {(b_1+b_2 x^q+(b_3+l) y^q)^{1/q}}\\
 & = & \max\Big\{\frac{a_1+a_2 x } {(b_1+b_2 x^q)^{1/q}}, \sup_{0\le l\le m}
 \frac{a_1+(a_3+l)c_2+a_2 x} {(b_1+(b_3+l)c_2^q+b_2 x^q)^{1/q}} \Big\}\nonumber\\
 & \le & \max\Big\{\frac{a_1+a_2  } {(b_1+b_2)^{1/q}}, \frac{a_1+a_2c_1  } {(b_1+b_2c_1^q)^{1/q}}, \sup_{0\le l\le m}
 \frac{a_1+a_2+(a_3+l)c_2} {(b_1+b_2+(b_3+l)c_2^q)^{1/q}},\nonumber\\
 & &  \sup_{0\le l\le m}
 \frac{a_1+a_2 c_1+(a_3+l)c_2} {(b_1+b_2c_1^q+(b_3+l)c_2^q)^{1/q}}
 \Big\},
\end{eqnarray*}
and
\eqref{inequality.lem2.eq1} follows.
\end{proof}

\begin{lem}\label{mainlemma}
Let $0<q\le 1$,  $s\ge 1$ be a positive  integer, and
let $\{a_j\}_{j\ge 1}$ be a finite decreasing sequence of nonnegative numbers with
\begin{equation}\label{mainlemma.eq1}
\sum_{k\ge 1} \Big(\sum_{i=1}^{s} a_{k s+i}^2\Big)^{1/2}\ge \delta \Big(\sum_{i=1}^s |a_{i}|^2\Big)^{1/2}\end{equation}
for some $\delta\in (0,1)$. Then
\begin{equation}\label{mainlemma.eq2}
\sum_{k\ge 1} \Big(\sum_{i=1}^{s} a_{k s+i}^2\Big)^{1/2}\le a(q, \delta) s^{1/2-1/q}\Big(\sum_{j\ge 1} a_j^q\Big)^{1/q},
\end{equation}
where $a(q, \delta)$ is defined as in \eqref{section1.eq9}.
\end{lem}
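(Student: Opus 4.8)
The plan is to regard the left-hand side of \eqref{mainlemma.eq2} as the objective of a constrained optimization and to bound it case by case, exactly mirroring the extremal analysis already carried out in Lemmas~\ref{inequality.lem} and~\ref{inequality.lem2}. First I would introduce the block quantities $A_k:=(\sum_{i=1}^s a_{ks+i}^2)^{1/2}$ and $B_k:=(\sum_{i=1}^s a_{ks+i}^q)^{1/q}$ for $k\ge0$, so that the target becomes $\sum_{k\ge1}A_k\le a(q,\delta)\,s^{1/2-1/q}(\sum_{k\ge0}B_k^q)^{1/q}$ while the hypothesis \eqref{mainlemma.eq1} reads $\sum_{k\ge1}A_k\ge\delta A_0$. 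Since both sides of \eqref{mainlemma.eq2} and of \eqref{mainlemma.eq1} are homogeneous of degree one in $\{a_j\}$, I may normalize, say by fixing $A_0=1$. The crude monotonicity bound $A_k\le s^{1/2}a_{ks+1}\le s^{1/2-1/q}B_{k-1}$ for $k\ge1$ (using that $a_{ks+1}$ is dominated by every entry of the preceding block), followed by the embedding $\sum_{k\ge0}B_k\le(\sum_{k\ge0}B_k^q)^{1/q}$ (valid since $q\le1$), already yields the admissible constant $1$; the content of the lemma is the sharper value $a(q,\delta)$, which improves on $1$ precisely in the regime relevant to the applications.

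Next I would invoke the Lagrange-multiplier principle used in the two preceding lemmas: for a fixed number of nonzero blocks the ratio $\sum_{k\ge1}A_k/(s^{1/2-1/q}(\sum_{k\ge0}B_k^q)^{1/q})$ is maximized either on the boundary of the feasible region or at configurations whose free components coincide. This collapses the problem to a few scalar parameters, namely a common level inside the first block, split by a parameter $r_0\in(0,1)$ into a part of the $\ell^2$-mass that is matched against the tail and a part that is not, together with a normalized second-block quantity $y$ measuring the size of $A_1$ relative to $A_0$. The tail blocks $A_k$ with $k\ge2$ are controlled by $A_{k-1}$ through monotonicity and summed off, so that only the interaction of the first two blocks survives in the extremal formula.

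Then I would assemble the four terms of $a(q,\delta)$ in \eqref{section1.eq9}. Splitting according to $y\le1$ versus $y\ge1$, and applying Lemma~\ref{inequality.lem} with $a=b=1$ and threshold $c=r_0\delta$ to the head-only configuration, produces the term $\frac{1+r_0\delta}{(1+r_0^q\delta^q)^{1/q}}$; applying Lemma~\ref{inequality.lem2} to the mixed head-plus-tail configurations produces the three $y$-dependent terms $\frac{2y}{(1+2^{-q/2}y^{2+q})^{1/q}}$, $\frac{3y}{(1+y)^{1/q}}$ and $\frac{2y}{(1+y)^{1/q}}$. Here the constraint \eqref{mainlemma.eq1} is used precisely to force $y\ge\sqrt{2}(1-r_0)\delta/2$, which fixes the lower limits of the suprema; taking the maximum over the cases and then the infimum over the free split $r_0$ reproduces $a(q,\delta)$.

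The hard part will be the bookkeeping of this extremal analysis rather than any single inequality: one must verify that the maximizing configuration really does reduce to the claimed finite structure uniformly in the a priori unbounded number of tail blocks, and one must check that every admissible head/tail level is dominated by one of the four listed expressions, so that their maximum is a genuine upper bound in every case. A subtle point is that \eqref{mainlemma.eq1} is a lower bound on the very quantity we are trying to bound from above, so it can enter only through the admissible range of $y$; keeping that logic straight, and correctly accounting for how the unusual exponent $2+q$ in the second term of \eqref{section1.eq9} emerges when the block levels are eliminated between the $\ell^2$ constraint and the $\ell^q$ objective, is where the care is needed.
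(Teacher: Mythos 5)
Your outline reproduces the skeleton that the definition \eqref{section1.eq9} already telegraphs --- an infimum over a splitting parameter $r_0$, Lemma \ref{inequality.lem} with $a=b=1$, $c=r_0\delta$ producing the first term, Lemma \ref{inequality.lem2} producing the other three, and the hypothesis \eqref{mainlemma.eq1} entering only through the lower limit $y\ge\sqrt{2}(1-r_0)\delta/2$ --- and at that level of resolution it agrees with the paper's proof. But the step you delegate to a ``Lagrange-multiplier principle'' is precisely the mathematical content of the lemma, and it cannot be inherited from the two preceding lemmas. In Lemmas \ref{inequality.lem} and \ref{inequality.lem2} the same scalar variables enter the numerator as $t_k$ and the denominator as $t_k^q$; here the numerator is a sum of $\ell^2$ norms of whole blocks while the denominator is the $\ell^q$ sum of individual entries, and the constraint is again an $\ell^2$ statement, so the distribution of entries \emph{inside} a block matters. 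The claim that maximizers collapse to ``a common level plus one parameter $y$'' is exactly what has to be proved, not a corollary of the earlier extremal analysis. The paper does this not variationally but by an explicit construction: after the case dichotomy on whether $\sum_{k\ge 2}a_{ks+1}$ is above or below $r_0\delta a_{s+1}$ (a dichotomy your sketch never states, though it is what decides which of the two lemmas is operative), it introduces in the second case the integer $s_0$, the smallest in $[1,s]$ with $a_{s+s_0+1}/a_{s+1}\le (s_0/s)^{1/2}$, approximates block $1$ by the resulting two-level profile via \eqref{mainlemma.pf.eq2}, \eqref{mainlemma.pf.eq5} and \eqref{mainlemma.pf.eq4} (using $(\theta a^2+(1-\theta)b^2)^{1/2}\le \theta^{1/2}a+(1-\theta^{1/2})b$), and only then applies Lemma \ref{inequality.lem2} with $c_1=\sqrt{(s_0-1)/s}$, $c_2=\sqrt{s_0/s}$.

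Two concrete consequences of this omission. First, your $y$ is not ``$A_1$ relative to $A_0$'': it is the shape parameter $\sqrt{s_0/s}$ of block $1$, and the term $2y/(1+2^{-q/2}y^{2+q})^{1/q}$ --- the exponent $2+q$ and the constant $2^{-q/2}$ --- arises exactly from the $\ell^q$ mass $(s_0-1)a_{s+s_0}^q\ge s((s_0-1)/s)^{1+q/2}a_{s+1}^q$ of the elevated part of that two-level profile; you flag this exponent as ``where the care is needed'' but supply no mechanism producing it. Second, the bound $y\ge\sqrt{2}(1-r_0)\delta/2$ is not an immediate reading of \eqref{mainlemma.eq1}: it needs the chain $\sqrt{2}\,s_0^{1/2}a_{s+1}\ge \big(s_0a_{s+1}^2+(s-s_0)(s_0/s)a_{s+1}^2\big)^{1/2}\ge \big(\sum_{i=1}^s a_{s+i}^2\big)^{1/2}\ge \delta s^{1/2}a_{s+1}-s^{1/2}\sum_{k\ge 2}a_{ks+1}\ge (1-r_0)\delta s^{1/2}a_{s+1}$, whose first two inequalities rest on the definition of $s_0$ and whose last rests on the Case II assumption that the far tail is small. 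As written, the proposal is a plausible plan whose decisive step --- the reduction of the within-block distribution to the finite structure on which the two lemmas can act --- is left open.
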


\begin{proof} Clearly the conclusion \eqref{mainlemma.eq2} holds when $a_{s+1}=0$ for in this case the left hand side of
\eqref{mainlemma.eq2} is equal to 0. So we may assume that $a_{s+1}\ne 0$ from now on.
Let $r_0$ be an arbitrarily number in $(0,1)$.  To establish  \eqref{mainlemma.eq2}, we consider two cases.
 \bigskip

{\em Case I}: $\sum_{k\ge 2}  a_{ks+1}\ge   r_0\delta  a_{s+1}$. 

In this case,
\begin{eqnarray} \label{mainlemma.pf.eq1}
 & &  \frac{\sum_{k\ge 1} \big(\sum_{i=1}^{s} a_{k s+i}^2\big)^{1/2}}{\big(\sum_{j\ge 1} a_j^q\big)^{1/q}}\le
\frac{ s^{1/2} \sum_{k\ge 1} a_{ks+1} } { s^{1/q} \big(  \sum_{k\ge 1} a_{ks+1}^q \big)^{1/q}}\nonumber\\
& = &  s^{1/2-1/q}
\frac{ 1+ \sum_{k\ge 2} a_{ks+1}/a_{s+1} } { \big( 1+ \sum_{k\ge 2} (a_{ks+1}/a_{s+1})^q \big)^{1/q}}\nonumber\\
%
& \le &  s^{1/2-1/q}
\max\Big\{\frac{1+r_0\delta}{(1+r_0^q\delta^q)^{1/q}}, \max_{k\ge 1} \frac{k+1}{(k+1)^{1/q}}\Big\}\nonumber\\
& = & s^{1/2-1/q} (1+r_0\delta)(1+r_0^q\delta^q)^{-1/q}, 
\end{eqnarray}
where  the first inequality  holds because $\{a_j\}_{j\ge 1}$ is a decreasing sequence of nonnegative numbers,
 the  second  inequality follows from  Lemma \ref{inequality.lem}, and the last  equality is true as
  $(1+t) (1+t^q)^{-1/q}$ is a decreasing function on $(0,1]$.

\bigskip

{\em Case II}:  $\sum_{k\ge 2} a_{ks+1} < r_0\delta  a_{s+1}$.

Let $s_0$ be the smallest integer in $[1, s]$ satisfying $a_{s+s_0+1}/a_{s+1}\le (s_0/s)^{1/2}$.
The existence and uniqueness of such an integer $s_0$ follow from the decreasing property of the sequence
$\{a_{s+s_0+1}/a_{s+1}\}_{s_0=1}^s$, the increasing property of the sequence $\{(s_0/s)^{1/2}\}_{s_0=1}^s$, and
$a_{s+s_0+1}/a_{s+1}\le (s_0/s)^{1/2}$ when $s_0=s$.
Then from the decreasing property of the sequence $\{a_j\}_{j\ge 1}$ and the definition of the integer $s_0$ it follows that
\begin{equation} \label{mainlemma.pf.eq2}
\frac{a_{s+s_0}}{a_{s+1}}\ge \big(\frac{s_0-1}{s}\big)^{1/2}
\end{equation}
and
\begin{eqnarray*}
 \sqrt{2} s_0^{1/2} a_{s+1} & \ge &  \big(s_0 a_{s+1}^2+(s-s_0) \frac{s_0}{s} a_{s+1}^2\big)^{1/2}   \ge     \Big(\sum_{i=1}^s a_{s+i}^2\Big)^{1/2}\nonumber\\
  & \ge &
   \delta \Big(\sum_{i=1}^s a_i^2\Big)^{1/2} - \sum_{k\ge 2} \Big(\sum_{i=1}^s a_{ks+i}^2\Big)^{1/2}\nonumber\\
   & \ge & \delta s^{1/2} a_{s+1}- s^{1/2} \sum_{k\ge 2} a_{ks+1}
  \ge      (1-r_0)\delta s^{1/2}  a_{s+1},
\end{eqnarray*}
which implies that
\begin{equation} \label{mainlemma.pf.eq3}
s_0\ge \frac{(1-r_0)^2 \delta^2}{2} s.
\end{equation}
Applying  the decreasing property of the sequence $\{a_j\}$ and using the inequality
 $(\theta a^2+(1-\theta) b^2)^{1/2}\le \theta^{1/2} a+ (1-\theta^{1/2}) b$
where $a\ge b\ge 0$ and $\theta\in [0,1]$, we obtain
\begin{eqnarray}\label{mainlemma.pf.eq5}
 & & s^{-1/2} \sum_{k\ge 1} \Big(\sum_{i=1}^{s} a_{k s+i}^2\Big)^{1/2}\nonumber\\
& \le &  s^{-1/2} \big ((s_0-1) a_{s+1}^2+ a_{s+s_0}^2+ (s-s_0) a_{s+s_0+1}^2\big)^{1/2}\nonumber\\
 & & + s^{-1/2} \sum_{k\ge 2}
\big(s_0 a_{ks+1}^2+(s-s_0) a_{ks+s_0+1}^2\big)^{1/2}\nonumber\\
& \le & \sqrt{\frac{s_0}{s}}
\Big( \frac{s_0-1}{s_0} a_{s+1}^2+ \frac{1}{s_0} a_{s+s_0}^2\Big)^{1/2}
+\Big(1-\sqrt{\frac{s_0}{s}}\Big) a_{s+s_0+1}\nonumber\\
& & + \sum_{k\ge 2}\Big( \sqrt\frac{s_0}{s}
 a_{ks+1}+\Big(1-\sqrt{\frac{s_0}{s}}\Big) a_{ks+s_0+1} \Big)\nonumber \\
&\le &  \sqrt{\frac{s_0-1}{s}}  a_{s+1}+ \frac{\sqrt{s_0}-\sqrt{s_0-1}}{\sqrt{s}} a_{s+s_0}+
 \sum_{k\ge 1}  a_{ks+s_0+1},
\end{eqnarray}
and
\begin{eqnarray}\label{mainlemma.pf.eq4}
\sum_{j\ge 1} a_j^q & \ge &  (s+1) a_{s+1}^q+ (s_0-1)a_{s+s_0}^q\nonumber\\
& & \quad +a_{s+s_0+1}^q+s\sum_{k\ge 2} a_{ks+s_0+1}^q.
\end{eqnarray}
Combining \eqref{mainlemma.pf.eq5} and \eqref{mainlemma.pf.eq4}, recalling
\eqref{mainlemma.pf.eq2} and the definition of the integer $s_0$, and applying Lemma \ref{inequality.lem2}
with $c_1=\sqrt{(s_0-1)/s}$ and $c_2=\sqrt{s_0/s}$,
we get
\begin{eqnarray}\label{mainlemma.pf.eq6}
 & &
   s^{1/q-1/2} \frac{\sum_{k\ge 1} \big(\sum_{i=1}^{s} a_{k s+i}^2\big)^{1/2}}{\big(\sum_{j\ge 1} a_j^q\big)^{1/q}} \nonumber\\
  & \le &    \frac{\sqrt{\frac{s_0-1}{s}}  a_{s+1}+ \frac{\sqrt{s_0}-\sqrt{s_0-1}}{\sqrt{s}} a_{s+s_0}+
a_{s+s_0+1}+ \sum_{k\ge 2}  a_{ks+s_0+1}}
 {\big((1+1/s) a_{s+1}^q+ (s_0-1)a_{s+s_0}^q/s+a_{s+s_0+1}^q/s+ \sum_{k\ge 2} a_{ks+s_0+1}^q\big)^{1/q}} \nonumber\\
 &\le  &
 \max\Big\{ \frac{\sqrt{\frac{s_0}{s}} }
 {\big(1+s_0/s \big)^{1/q}},
  \frac{\sqrt{\frac{s_0-1}{s}} \Big(1+   \frac{\sqrt{s_0}-\sqrt{s_0-1}}{\sqrt{s}}\Big)}
 {\big(1+1/s + ((s_0-1)/s)^{1+q/2}\big)^{1/q}},\nonumber\\
& & \sup_{l\ge 0} \frac{(l+2)\sqrt{\frac{s_0}{s}}}
 {\big(1+s_0/s+(l+1/s)\sqrt{\frac{s_0}{s}} \big)^{1/q}},\nonumber\\
 && \sup_{l\ge 0} \frac{(l+1)\sqrt{\frac{s_0}{s}}+\sqrt{\frac{s_0-1}{s}}\Big(1+ \frac{\sqrt{s_0}-\sqrt{s_0-1}}{\sqrt{s}}\Big)}
 {\big(\big((l+1/s)\sqrt{\frac{s_0}{s}}+ 1+1/s\big) + ((s_0-1)/s)^{1+q/2}\big)^{1/q}}\Big\}.
 \end{eqnarray}
Therefore
\begin{eqnarray} \label{mainlemma.pf.eq7}
 & &
 \frac{\sum_{k\ge 1} \big(\sum_{i=1}^{s} a_{k s+i}^2\big)^{1/2}}{\big(\sum_{j\ge 1} a_j^q\big)^{1/q}} \nonumber\\
 & \le &  s^{1/2-1/q} \max\Big\{ \frac{\sqrt{{s_0}/{s}} }
 {\big(1+s_0/s \big)^{1/q}},
  \frac{\sqrt{{s_0}/{s}} }
 {\big(1+ 2^{-q/2}  (s_0/s)^{1+q/2}\big)^{1/q}},\nonumber\\
& & \sup_{l\ge 0} \frac{(l+2)\sqrt{{s_0}/{s}}}
 {\big(1+s_0/s+l\sqrt{{s_0}/{s}} \big)^{1/q}},
 \sup_{l\ge 0} \frac{(l+2)\sqrt{{s_0}/{s}}}
 {\big(1+l\sqrt{{s_0}/{s}}+ 2^{-q/2} (s_0/s)^{1+q/2}\big)^{1/q}}\Big\}\nonumber\\
 & \le &  s^{1/2-1/q} \max\Big\{
  \frac{2 \sqrt{{s_0}/{s}} }
 {\big(1+ 2^{-q/2}  (s_0/s)^{1+q/2}\big)^{1/q}},\  \sup_{l\ge 1} \frac{(l+2)\sqrt{{s_0}/{s}}}
 {(1+l\sqrt{{s_0}/{s}}\big)^{1/q}}\Big\}\nonumber\\
 & \le &   s^{1/2-1/q} \max\Big\{ \sup_{\sqrt{2}(1-r_0)\delta/2\le y\le 1}
  \frac{2 y}
 {\big(1+ 2^{-q/2}  y^{2+q}\big)^{1/q}}, \nonumber\\
& & \quad   \sup_{\sqrt{2}(1-r_0)\delta/2\le y\le 1}
 \frac{3y}
 {\big(1+y\big)^{1/q}}, \  \sup_{  1\le y}
 \frac{2y}
 {\big(1+y\big)^{1/q}}\Big\},
 \end{eqnarray}
 where the third inequality is valid by \eqref{mainlemma.pf.eq3} and
 the first inequality follows from
 the following two inequalities:
\begin{equation}
\sqrt{\frac{t-1}{s}}\Big(1+\frac{\sqrt{t}-\sqrt{t-1}}{\sqrt{s}}\Big)\le \sqrt{\frac{t}{s}}
\end{equation}
and
\begin{equation}
\frac{1}{s}+ \Big(\frac{t-1}{s}\Big)^{1+q/2}\ge  \Big(\frac{1}{s}\Big)^{1+q/2} + \Big(\frac{t-1}{s}\Big)^{1+q/2}\ge 2^{-q/2} \Big(\frac{t}{s}\Big)^{1+q/2}, \quad 1\le t\le s.
\end{equation}

The conclusion \eqref{mainlemma.eq2} follows from \eqref{mainlemma.pf.eq1} and \eqref{mainlemma.pf.eq7}.
 \end{proof}

Now we give the proof of Theorem \ref{maintheorem}.

\begin{proof}
[Proof of Theorem \ref{maintheorem}]
Let ${\bf h}$ satisfy
\begin{equation}\label{maintheorem.pf.eq1}
{\bf Ah}={\bf 0}\end{equation}
 and let $S_0$ be a subset of $\{1, \ldots, n\}$ with cardinality $\# S_0$  less than or equal to $s$.
We partition $S_0^c\subset \{1, \ldots, n\}$ as $S_0^c=S_1\cup\cdots\cup  S_l$,
where $S_1$ is the set of indices of the $s$ largest components, in absolute value, of ${\bf h}$ in $S_0^c$, $S_2$ is the set
of indices of the next $s$  largest components, in absolute value, of ${\bf h}$
 in $(S_0\cup S_1)^c$, and so on.
Applying the parallelogram identity, we obtain from the restricted isometry property \eqref{section1.eq4} that
\begin{equation}\label{maintheorem.pf.eq3}
|\langle {\bf Au}, {\bf Av}\rangle|\le \delta_{2s}({\bf A})  \|{\bf u}\|_2 \|{\bf v}\|_2
\end{equation}
for all  $s$-sparse vectors ${\bf u},{\bf v}\in \Sigma_s$  whose supports have empty intersection
\cite{candestao05}.
Combining \eqref{maintheorem.pf.eq1} and \eqref{maintheorem.pf.eq3} and using the restricted isometry property \eqref{section1.eq4}
yield
\begin{eqnarray} \label{maintheorem.pf.eq4}
& & (1-\delta_{2s}({\bf A}) )  \Big(\|{\bf h}_{S_0}\|_2^2+ \|{\bf h}_{S_1}\|_2^2)\nonumber\\
& \le & \langle {\bf A}({\bf h}_{S_0}+{\bf h}_{S_1}), {\bf A} ({\bf h}_{S_0}+{\bf h}_{S_1})\rangle\nonumber\\
& \le & \big\langle {\bf A}\big(\sum_{i\ge 2} {\bf h}_{S_i}\big), {\bf A} \big(\sum_{j\ge 2} {\bf h}_{S_j}\big)\big\rangle\nonumber\\
& \le & \sum_{i,j\ge 2} \delta_{2s} ({\bf A}) \|{\bf h}_{S_i}\|_2 \|{\bf h}_{S_j}\|_2+\sum_{j\ge 2}\|{\bf h}_{S_j}\|_2^2
\nonumber\\
&= &  \delta_{2s}({\bf A}) \Big(\sum_{j\ge 2} \|{\bf h}_{S_j}\|_2\Big)^2+ \sum_{j\ge 2}\|{\bf h}_{S_j}\|_2^2\nonumber\\
& \le &  (1+\delta_{2s}({\bf A})) \Big(\sum_{j\ge 2} \|{\bf h}_{S_j}\|_2\Big)^2,
\end{eqnarray}
which implies that
\begin{equation} \label{maintheorem.pf.eq5}
\sum_{j\ge 2} \|{\bf h}_{S_j}\|_2\ge \Big(\frac{1-\delta_{2s}({\bf A})}{1+\delta_{2s}({\bf A})}\Big)^{1/2} \|{\bf h}_{S_1}\|_2.
\end{equation}
Applying   Lemma \ref{mainlemma} with $\delta_1=\big(\frac{1-\delta_{2s}({\bf A}) }{1+\delta_{2s}({\bf A}) }\big)^{1/2}$
gives
\begin{equation}  \label{maintheorem.pf.eq6}
\sum_{j\ge 2} \|{\bf h}_{S_j}\|_2\le a(q, \delta_1) s^{1/2-1/q} \|{\bf h}_{S_0^c}\|_q.
\end{equation}
Then  substituting the above estimate for $\sum_{j\ge 2} \|{\bf h}_{S_j}\|_2$ into the right hand side of the inequality
\eqref{maintheorem.pf.eq4} and recalling that ${\bf h}_{S_0}$ is an  $s$-sparse vector lead to
\begin{equation}
\|{\bf h}_{S_0}\|_q\le s^{1/q-1/2}\|{\bf h}_{S_0}\|_2\le
 s^{1/q-1/2} (\delta_1)^{-1} \sum_{j\ge 2} \|{\bf h}_{S_j}\|_2\le
 \frac{a(q, \delta_1) }{\delta_1} \|{\bf h}_{S_0^c}\|_q,
\end{equation}
the desired null space property.
\end{proof}

\subsection{Proof of Theorem \ref{maintheorem3}}
We follow the argument in \cite{candes, candesrombergtao06}.
Set ${\bf h}={\bf x}^*-{\bf x}$, and denote
 by $S_0$ the support of the vector ${\bf x}_s\in \Sigma_s$,
by $S_0^c$ the complement of the set $S_0$ in $\{1,\ldots, n\}$. Then
\begin{equation}\label{maintheorem3.pf.eq1}
\|{\bf Ah}\|_2=\|{\bf A}{\bf x}^*-{\bf A}{\bf x}\|_2\le \|{\bf A} {\bf x}^*-{\bf y}\|_2+
\|{\bf z}\|_2\le 2\epsilon
\end{equation}
and
\begin{equation}\label{maintheorem3.pf.eq2}
\|{\bf h}_{S_0^c}\|_q^q\le \|{\bf h}_{S_0}\|_q^q+2 \|{\bf x}-{\bf x}_s\|_q^q,
\end{equation}
since
\begin{eqnarray*}
\|{\bf x}_{s}\|_q^q+ \|{\bf x}_{_{S_0^c}}\|_q^q & = &
\|{\bf x}\|_q^q\ge \|{\bf x}^*\|_q^q=\|{\bf x}_s+{\bf h}_{_{S_0}}\|_q^q+\|{\bf x}_{_{S_0^c}}+{\bf h}_{_{S_0^c}}\|_q^q\\
&  \ge  &
\|{\bf x}_s\|_q^q-\|{\bf h}_{_{S_0}}\|_q^q+\|{\bf h}_{S_0^c}\|_q^q-\|{\bf x}_{_{S_0^c}}\|_q^q.\end{eqnarray*}
Similar to the argument used in the proof of Theorem \ref{maintheorem}, we partition $S_0^c\subset \{1, \ldots, n\}$ as $S_0^c=S_1\cup\cdots\cup  S_l$,
where $S_1$ is the set of indices of the $s$ largest absolute-value component of ${\bf h}$ in $S_0^c$, $S_2$ is the set
of indices of the next $s$ largest absolute-value components of ${\bf h}$ on $S_0^c$, and so on.
Then  it follows from \eqref{section1.eq4}, \eqref{maintheorem.pf.eq4} and \eqref{maintheorem3.pf.eq1} that
\begin{eqnarray} \label{maintheorem3.pf.eq3}
& & (1-\delta_{2s}({\bf A}) )  \big(\|{\bf h}_{S_0}\|_2^2+ \|{\bf h}_{S_1}\|_2^2\big)\nonumber\\
& \le & \langle {\bf A}({\bf h}_{S_0}+{\bf h}_{S_1}), {\bf A} ({\bf h}_{S_0}+{\bf h}_{S_1})\rangle\nonumber\\
& \le & \big\langle {\bf A}{\bf h}-{\bf A}\big(\sum_{i\ge 2} {\bf h}_{S_i}\big),{\bf A}{\bf h}- {\bf A} \big(\sum_{j\ge 2} {\bf h}_{S_j}\big)\big\rangle\nonumber\\
& \le &  \big(2\epsilon+ \sqrt{1+\delta_{2s}({\bf A})} \sum_{j\ge 2} \|{\bf h}_{S_j}\|_2\big)^2.
\end{eqnarray}
By the continuity of the function $a(q, \delta)$ about $\delta\in (0,1)$ and the assumption \eqref{maintheorem3.eq2}, there exists a positive  number $r$ such that
\begin{equation}\label{maintheorem3.pf.eq4}
a(q, \delta_1/(1+r))< \delta_1/(1+r).
\end{equation}

\bigskip

If $\sum_{j\ge 2} \|{\bf h}_{S_j}\|_2\le 2 \epsilon/ (r \sqrt{1+\delta_{2s}({\bf A})})$, then it follows from
\eqref{maintheorem3.pf.eq2}, \eqref{maintheorem3.pf.eq3} and the fact that ${\bf h}_{S_0}\in \Sigma_s$ that
\begin{eqnarray} \label{maintheorem3.pf.eq5}
\|{\bf x}^*-{\bf x}\|_2 & = &
\|{\bf h}\|_2\le
\big(\|{\bf h}_{S_0}\|_2^2+ \|{\bf h}_{S_1}\|_2^2)^{1/2} +\sum_{j\ge 2} \|{\bf h}_{S_j}\|_2\nonumber\\
& \le &   2 \Big(\frac{(1+r) }{ r\sqrt{1-\delta_{2s}({\bf A})}}+\frac{ 1}{r \sqrt{1+\delta_{2s}({\bf A})}}\Big)\epsilon,
\end{eqnarray}
and
\begin{eqnarray}\label{maintheorem3.pf.eq6}
\|{\bf x}^*-{\bf x}\|_q^q & \le  & \|{\bf h}_{S_0}\|_q^q+  \|{\bf h}_{S_0^c}\|_q^q
\le 2 \|{\bf h}_{S_0}\|_q^q
+2 \|{\bf x}-{\bf x}_s\|_q^q
\nonumber\\
 & \le &  2 s^{1-q/2} \|{\bf h}_{S_0}\|_2^q
+2 \|{\bf x}-{\bf x}_s\|_q^q\nonumber\\
 & \le &  2^{1+q} \frac{(1+r)^q }{ r^q(1-\delta_{2s}({\bf A}))^{q/2}} s^{1-q/2}\epsilon^q +2 \|{\bf x}-{\bf x}_s\|_q^q.
\end{eqnarray}

\bigskip

If $\sum_{j\ge 2} \|{\bf h}_{S_j}\|_2\ge 2 \epsilon/ (r \sqrt{1+\delta_{2s}({\bf A})})$, then
\begin{equation}\label{maintheorem3.pf.eq7}
\delta_1 \big(\|{\bf h}_{S_0}\|_2^2+ \|{\bf h}_{S_1}\|_2^2\big)^{1/2}\le  (1+r) \sum_{j\ge 2} \|{\bf h}_{S_j}\|_2
\end{equation}
by \eqref{maintheorem3.pf.eq3}, where we set $\delta_1=\sqrt{(1-\delta_{2s}({\bf A}))/(1+\delta_{2s}({\bf A}))}$.
Using \eqref{maintheorem3.pf.eq7}  and applying Lemma \ref{mainlemma} with $\delta=\delta_1/(1+r)$ give
\begin{equation}\label{maintheorem3.pf.eq8}
\sum_{j\ge 2} \|{\bf h}_{S_j}\|_2\le a(q, \delta_1/(1+r)) s^{1/2-1/q} \|{\bf h}_{S_0^c}\|_q.
\end{equation}
 Noting the fact that ${\bf h}_{S_0}\in \Sigma_s$ and   then applying \eqref{maintheorem3.pf.eq2}, \eqref{maintheorem3.pf.eq7} and \eqref{maintheorem3.pf.eq8} yield
\begin{eqnarray*}
\|{\bf h}_{S_0}\|_q^q & \le &  s^{1-q/2} \|{\bf h}_{S_0}\|_2^q\le
\Big(\frac{a(q, \delta_1/(1+r))}{\delta_1/(1+r)}\Big)^q \|{\bf h}_{S_0^c}\|_q^q\\
 & \le &
\Big(\frac{a(q, \delta_1/(1+r))}{\delta_1/(1+r)}\Big)^q \|{\bf h}_{S_0}\|_q^q+2
\Big(\frac{a(q, \delta_1/(1+r))}{\delta_1/(1+r)}\Big)^q \|{\bf x}-{\bf x}_s\|_q^q.
\end{eqnarray*}
This, together with \eqref{maintheorem3.pf.eq4}, leads to the  following crucial estimate:
\begin{equation} \label{maintheorem3.pf.eq9}
\|{\bf h}_{S_0}\|_q^q\le \frac{2 (a(q, \delta_1/(1+r)))^q}{(\delta_1/(1+r))^q-(a(q, \delta_1/(1+r)))^q}\|{\bf x}-{\bf x}_s\|_q^q.
\end{equation}
Combining \eqref{maintheorem3.pf.eq2},
\eqref{maintheorem3.pf.eq7}, \eqref{maintheorem3.pf.eq8} and \eqref{maintheorem3.pf.eq9}, we obtain
\begin{eqnarray} \label{maintheorem3.pf.eq10}
\|{\bf x}^*-{\bf x}\|_2 & \le  &  \big(\|{\bf h}_{S_0}\|_2^2+ \|{\bf h}_{S_1}\|_2^2)^{1/2} +\sum_{j\ge 2} \|{\bf h}_{S_j}\|_2 \nonumber\\
& \le & \frac{2^{1/q}(1+r+\delta_1)\big ( a(q, \delta_1/(1+r))\big)^2 }
{\delta_1 \big((\delta_1/(1+r))^q-(a(q, \delta_1/(1+r)))^q\big)^{1/q} }
 s^{1/2-1/q} \|{\bf x}-{\bf x}_s\|_q ,
\end{eqnarray}
 and
 \begin{eqnarray} \label{maintheorem3.pf.eq11}
 \|{\bf x}^*-{\bf x}\|_q^q & \le & 2 \|{\bf h}_{S_0}\|_q^q
+2 \|{\bf x}-{\bf x}_s\|_q^q\nonumber\\
 &\le &
\frac{2 (\delta_1/(1+r))^q+2(a(q, \delta_1/(1+r)))^q}{(\delta_1/(1+r))^q-(a(q, \delta_1/(1+r)))^q}\|{\bf x}-{\bf x}_s\|_q^q.
 \end{eqnarray}

 \bigskip
 The desired error estimates  \eqref{maintheorem3.neweq2} and \eqref{maintheorem3.neweq3} follow from \eqref{maintheorem3.pf.eq5},
 \eqref{maintheorem3.pf.eq6}, \eqref{maintheorem3.pf.eq10} and \eqref{maintheorem3.pf.eq11}.

\subsection{Proof of Theorem \ref{maintheorem1}} The conclusion in Theorem \ref{maintheorem1} follows from Corollary \ref{maincorollary}  and the observation that
\begin{equation}\label{maintheorem1.pf.eq1}
\lim_{q\to 0+} a(q, \delta)=0\end{equation} for any $\delta\in (0,1)$.

\subsection{Proof of Theorem \ref{maintheorem2}}
 Let
 \begin{equation}\label{maintheorem2.pf.eq1}
 \tilde q_{\max}(\delta_1)=\sup\{q\in (0,1]| \ a (q,  \delta_1)<\delta_1\}.
 \end{equation}
  Take sufficiently small $\epsilon>0$.
Note that  \begin{eqnarray}\label{maintheorem2.pf.eq2}
& &  \sup_{\sqrt{2}(1-r_0)\delta_1/2\le y\le 1}
  \frac{2 y}
 {\big(1+ 2^{-q/2}  y^{2+q}\big)^{1/q}} \nonumber\\
& = &  \left\{\begin{array}{l}
  \frac{ \sqrt{2}(1-r_0)\delta_1}
 {\big(1+ 2^{-1-q}  ((1-r_0)\delta_1)^{2+q}\big)^{1/q}} \\
 \qquad    {\rm if} \ q< 2^{-q} (1-r_0)^{2+q} \delta_1^{2+q},\\
 q^{1/(2+q)} \big(1+\frac{q}{2}\big)^{-1/q} 2^{(1+3q/2)/(2+q)} \\
 \qquad  {\rm if} \ 1\ge q\ge  2^{-q} (1-r_0)^{2+q} \delta_1^{2+q}.
 \end{array}\right.
\end{eqnarray}
Then for any small $q> (e/2+\epsilon)\delta_1^2$ and  sufficiently small $\delta_1>0$, we have that
$q\ge  2^{-q} (1-r_0)^{2+q} \delta_1^{2+q}$ for all $r_0\in (0,1)$. Then applying \eqref{section1.eq9} and
\eqref{maintheorem2.pf.eq2} yields
\begin{eqnarray*}
a(q, \delta_1) & \ge &  \inf_{0<r_0<1} \sup_{\sqrt{2}(1-r_0)\delta_1/2\le y\le 1}
  \frac{2 y}
 {\big(1+ 2^{-q/2}  y^{2+q}\big)^{1/q}}\nonumber\\
 & = &  q^{1/(2+q)} \big(1+\frac{q}{2}\big)^{-1/q} 2^{(1+3q/2)/(2+q)}\nonumber\\
 & \ge  & (1+ \epsilon/e)^{1/2} \delta_1,
 \end{eqnarray*}
where the last inequality holds since
\begin{equation} \label{maintheorem2.pf.eq3}
\lim_{q\to 0} q^{-q/(4+2q)} \big(1+\frac{q}{2}\big)^{-1/q} 2^{(1+3q/2)/(2+q)}=(2/e)^{1/2}.\end{equation}
Therefore
\begin{equation}\label{maintheorem2.pf.eq4}
\limsup_{\delta_1\to 0} \frac{\tilde q_{\max}(\delta_1)}{\delta_1^2}
\le \limsup_{\delta_1\to 0} \frac{(e/2+\epsilon) \delta_1^2 }{\delta_1^2} \le \frac{e}{2}+\epsilon
\end{equation}
for any sufficiently small $\epsilon>0$.

Take $r_0=1-\sqrt{2}/4$ and sufficiently small $\epsilon>0$.
Then  for $q\le  (e/2-\epsilon)\delta_1^2$ and  sufficiently small $\delta_1>0$,
\begin{equation} \label{maintheorem2.pf.eq5}\left\{\begin{array}{l}
(1+r_0 \delta_1) (1+r_0^q \delta_1^q)^{-1/q}\le 2 (3/2)^{-1/q}\le (1-\epsilon/e)^{1/2}\delta_1,\\
\sup_{y\ge 1} y(1+y)^{-1/q}\le  \sup_{y\ge 1}  (1+y)^{1-1/q}\le
 2^{1-1/q}\le (1-\epsilon/e)^{1/2} \delta_1/2,\\
 \sup_{y\ge \sqrt{2}(1-r_0)\delta_1/2}\frac{y}
 {(1+y)^{1/q}} = \frac{\sqrt{2}(1-r_0) \delta_1/2}{
 (1+\sqrt{2}(1-r_0) \delta/2)^{1/q}}\le (1-\epsilon/e)^{1/2} \delta_1/3,
\end{array}\right.
 \end{equation}
 and
 \begin{equation} \label{maintheorem2.pf.eq6}
 \sup_{\sqrt{2}(1-r_0)\delta_1/2\le y\le 1}
  \frac{2 y}
 {\big(1+ 2^{-q/2}  y^{2+q}\big)^{1/q}}\le  (1-\epsilon/e)^{1/2}\delta_1
 \end{equation}
 by \eqref{section1.eq9}, \eqref{maintheorem2.pf.eq2} and \eqref{maintheorem2.pf.eq3}. Therefore
\begin{equation}\label{maintheorem2.pf.eq7}
\liminf_{\delta_1\to 0} \frac{\tilde q_{\max}(\delta_1)}{\delta_1^2}
\ge \limsup_{\delta_1\to 0} \frac{(e/2-\epsilon) \delta_1^2 }{\delta_1^2} \ge \frac{e}{2}-\epsilon
\end{equation}
  by \eqref{maintheorem2.pf.eq5} and \eqref{maintheorem2.pf.eq6}.
 Combining \eqref{maintheorem2.pf.eq4} and \eqref{maintheorem2.pf.eq7} and recalling that $\epsilon>0$
 is a sufficiently small number chosen arbitrarily, we have
  \begin{equation}\label{maintheorem2.pf.eq8}
 \lim_{\delta_1\to 0} \frac{\tilde q_{\max}(\delta_1)}{\delta_1^2}=\frac{e}{2}.
 \end{equation}

 By Corollary \ref{maincorollary}, we have
\begin{equation}
\label{maintheorem2.pf.eq9}
q_{\max}(\delta; m,n,s)\ge \tilde q_{\max}(\sqrt{(1-\delta)/(1+\delta)}).
\end{equation}
This together with \eqref{maintheorem2.pf.eq8} implies that
\begin{equation}
\label{maintheorem2.pf.eq10}
\liminf_{\delta\to 1-} \frac{q_{\max}(\delta; m,n,s)}{1-\delta} \ge
\lim_{\delta\to 1-} \frac{\tilde q_{\max}(\sqrt{(1-\delta)/(1+\delta)})}{(1-\delta)/(1+\delta)}\times \frac{1}{ 1+\delta}=
\frac{e}{4},
\end{equation}
and hence completes the proof.

\bigskip

{\bf Acknowledgement} \quad Part of this work is done when the author is visiting
Vanderbilt University and Ecole Polytechnique Federale de Lausanne   on his sabbatical leave.
The author would like to thank Professors Akram Aldroubi, Douglas Hardin,  Michael Unser and Martin Vetterli for the hospitality and  fruitful discussion, and Professor R. Chartrand for his comments on the early version of this manuscript.

\begin{thebibliography}{99}

\bibitem{blu08} T. Blu, P.L. Dragotti, M. Vetterli, P. Marziliano and L. Coulot,
Sparse sampling of signal innovations, {\em IEEE Signal Processing Magazine}, {\bf 25}(2008),  31--40.

\bibitem{boyd04} S. Boyd and L. Vandenberghe, {\em Convex Optimization}, Cambridge University Press, Cambridge, 2004.


\bibitem{cai10}  T. T. Cai, L. Wang and G. Xu, Shifting inequality and recovery of sparse signals, {\em IEEE Trans. Signal Process.}, 
{\bf 58}(2010), 1300--1308.

\bibitem{candes}  E. J. Candes,  The restricted isometry property and its implications for compressed sensing, {\em C. R. Acad. Sci. Paris, Ser. I}, {\bf 346}(2008), 589--592.

\bibitem{candesrombergtao06} E. J. Candes, J. Romberg and T. Tao,
Stable signal recovery from incomplete and inaccurate measurements, {\em Comm. Pure Appl. Math.}, {\bf 59}(2006), 1207--1223.

\bibitem{candes06} E. J. Candes, J. Romberg and T. Tao, Robust uncertainty principles: exact signal reconstruction from highly incomplete frequency information, {\em IEEE Trans. Inform. Theory}, {\bf 52}(2006), 489--509.

\bibitem{candestao05} E. J. Candes and T. Tao, Decoding by linear programming, {\em IEEE Trans. Inform. Theory}, {\bf 51}(2005), 4203--4215.

    \bibitem{candeswakin08} E. J. Candes and W. B. Wakin, Enhancing sparsity by reweighted $\ell_1$ minimization, {\em J. Fourier Anal. Appl.}, {\bf 14}(2008), 877--905.

\bibitem{char07} R. Chartrand, Exact reconstruction of sparse signals via nonconvex minimization, {\em
IEEE Signal Proc. Letter}, {\bf 14}(2007), 707--710.

\bibitem{chartrand08} R. Chartrand and V. Staneva, Restricted isometry properties and nonconvex compressive sensing, {\em
Inverse Problems}, {\bf 24}(2008), 035020 (14 pp).


\bibitem{chen09} X. Chen, F. Xu and Y. Ye,
Lower bound theory of nonzero entries in solution of $\ell_2$-$\ell_p$ minimization, Preprint 2009.

\bibitem{cdd09} A. Cohen, W. Dahmen and R. DeVore, Compressive sensing and best $k$-term approximation, {\em J. Amer. Math. Soc.},
{\bf 22}(2009), 211--231.

\bibitem{davisgribonval09} M. E. Davies and R. Gribonval, Restricted isometry constants where $\ell^p$ sparse recovery can fail for $0<p\le 1$,
{\em IEEE Trans. Inform. Theorey}, {\bf 55}(2009), 2203--2214.

\bibitem{Daubechies10} I. Dauchebies,  R. DeVore, M. Fornasier, and C. S. Gunturk,
Iteratively re-weighted least squares minimization for sparse recovery, {\em Commun. Pure Appl. Math.},
{\bf 63}(2010), 1--38.

\bibitem{donoho03} D. Donoho and M. Elad, Optimally sparse representation in general (nonorthogonal) dictionaries via $\ell^1$ norm minimization, {\em Proc. Nat. Acad. Sci. USA}, {\bf 100}(2003), 2197--2002.

 \bibitem{foucart10} S. Foucart, A note on guaranteed sparse recovery via $\ell_1$-minimization, {\em Appl. Comput. Harmonic Anal.},
 DOI 10.1016/j.acha.2009.10.004

 \bibitem{foucartlai09} S. Foucart and M.-J. Lai, Sparsest solutions of underdetermined linear system via $\ell_q$-minimization for $0<q\le 1$,
  {\em Appl. Comput. Harmonic Anal.},  {\bf 26}(2009), 395--407.

\bibitem{gribonvalnielson07} G. Gribonval and M. Nielsen, Highly sparse representations from dictionaries are unique and independent of the sparseness measure,
{\em Appl. Comput. Harmonic Anal.}, {\bf 22}(2007), 335--355.

\bibitem{jiang10} X. Jiang and Y. Ye, A note on complexity of $L_p$ minimization, Preprint 2009.

\bibitem{nat95} B. K. Natarajan, Sparse approximate solutions to linear systems, {\em SIAM J. Comput.}, {\bf 24}(1995), 227--234.

\bibitem{rao99} B. D. Rao and K. Kreutz-Delgado, An affine scaling methodology for best basis selection, {\em IEEE Trans. Signal Process.}, {\bf 47}(1999), 187--200.

 \bibitem{saab08} R. Saab, R. Chartrand, O. Yilmaz, Stable sparse approximations via nonconvex
optimization,  In  {\em IEEE International Conference on Acoustics, Speech and
Signal Processing (ICASSP)}, 2008, 3885--3888.

\bibitem{saab10} R. Saab and O. Yilmaz,
Sparse recovery by non-convex optimization -- instance optimality,
{\em Appl. Comput. Harmonic Anal.}, doi:10.1016/j.acha.2009.08.002

\end {thebibliography}

\end{document}